\newtheorem{theorem}{{\bf Theorem}}
\newtheorem{lemma}{{\bf Lemma}}
\newtheorem{proposition}[theorem]{Proposition}
\newenvironment{proof}[1][Proof]{\begin{trivlist}
\item[\hskip \labelsep {\bfseries #1}]}{\end{trivlist}}
\newtheorem{defn}{Definition}
\newenvironment{remark}[1][Remark]{\begin{trivlist}
\item[\hskip \labelsep {\bfseries #1}]}{\end{trivlist}}
\newcommand{\qed}{\nobreak \ifvmode \relax \else
      \ifdim\lastskip<1.5em \hskip-\lastskip
      \hskip1.5em plus0em minus0.5em \fi \nobreak
      \vrule height0.75em width0.5em depth0.25em\fi}
\newcommand{\barr}{\begin{array}}
\newcommand{\earr}{\end{array}}
\newcommand{\benum}{\begin{enumerate}}
\newcommand{\eenum}{\end{enumerate}}
\newcommand{\bit}{\begin{itemize}}
\newcommand{\eit}{\end{itemize}}
\newcommand{\bc}{\begin{center}}
\newcommand{\ec}{\end{center}}
\newcommand{\bdes}{\begin{description}}
\newcommand{\edes}{\end{description}}
\newcommand{\efig}{\end{figure}}
\newcommand{\bemq}{\begin{quote} \begin{em}}
\newcommand{\eemq}{\end{em} \end{quote}}
\newcommand{\bmp}{\begin{minipage}}
\newcommand{\emp}{\end{minipage}}
\newcommand{\eqn}[1]{(\ref{#1})}
\newcommand{\fgr}[1]{Fig.~\ref{#1}}
\newcommand{\tbl}[1]{Table~\ref{#1}}
\newcommand{\brac}[1]{\left({#1}\right)}
\newcommand{\sbrac}[1]{\left[{#1}\right]}
\newcommand{\cbrac}[1]{\left\{{#1}\right\}}
\newcommand{\tendsto}{\rightarrow}
\newcommand{\belongs}{\in}
\newcommand{\expect}[1]{{\mathbb{E}}\!\left[{#1}\right]}
\newcommand{\prob}[1]{\text{Pr}\brac{#1}}
\DeclareMathOperator*{\argmax}{arg\,max}
\newcommand{\abs}[1]{\left\lvert{#1}\right\lvert} 
\newcommand{\Given}{\Big\arrowvert}
\newcommand{\strategy}{X}
\newcommand{\utility}{U}
\newcommand{\noise}{P_0}
\begin{document}


\title{Optimal Distributed Channel Assignment in D2D Networks Using Learning in Noisy Potential Games}
%
\author{Mohd. Shabbir Ali, Pierre Coucheney, and Marceau Coupechoux
\thanks{Mohd. Shabbir Ali and M. Coupechoux are with LTCI, Telecom ParisTech, Universit\'e Paris-Saclay.
Emails: mdshabbirali88@gmail.com, marceau.coupechoux@telecom-paristech.fr. 
P. Coucheney is with UVSQ, David-Lab, France. 
Email: pierre.coucheney@uvsq.fr. This work was supported by NetLearn ANR project (ANR-13-INFR-004) and the Indo-French CEFIPRA project "D2D for LTE-Advanced"}}
%


\maketitle

\begin{abstract}
We present a novel solution for Channel Assignment Problem (CAP) in Device-to-Device (D2D) wireless networks that takes into account the throughput estimation noise. CAP is known to be NP-hard in the literature and there is no practical optimal learning algorithm that takes into account the estimation noise. In this paper, we first formulate the CAP as a stochastic optimization problem to maximize the expected sum data rate. To capture the estimation noise, CAP is modeled as a noisy potential game, a novel notion we introduce in this paper. Then, we propose a distributed Binary Log-linear Learning Algorithm (BLLA) that converges to the optimal channel assignments. Convergence of BLLA is proved for bounded and unbounded noise. Proofs for fixed and decreasing temperature parameter of BLLA are provided. A sufficient number of estimation samples is given that guarantees the convergence to the optimal state. We assess the performance of BLLA by extensive simulations, which show that the sum data rate increases with the number of channels and users. Contrary to the better response algorithm, the proposed algorithm achieves the optimal channel assignments distributively even in presence of estimation noise.

\end{abstract}

\section{Introduction}
Ever increasing demand for higher data rates of mobile users and scarcity of wireless frequency spectrum 
is making efficient utilization spectrum resources increasingly critical.
Device-to-Device (D2D) networks increase the utilization of the spectrum resources by providing spatial spectrum reuse~\cite{asadi2014survey,tehrani2014device,song2014game,ye2015distributed,li2014coalitional,chen2014coalition,cai2014distributed,huang2014resource,xu2013efficiency,wang2016optimal,maghsudi2016hybrid,gao2014joint}.
In a D2D network, D2D users reuse the radio resources allocated to traditional cellular Users (UEs).
The cellular UEs communicate with the Base Station (BS) while the D2D UEs communicate among themselves without or with limited help from BS.
 This is possible provided that the interference caused by the D2D UEs to cellular UEs is limited. 
A crucial problem in underlay D2D networks is thus to assign channels to UEs to increase its utilization while maintaining low interference.

Channel Assignment Problem (CAP) in D2D networks is challenging due to the lack of  Channel State Information (CSI) of D2D links at the BS and because feedback overheads should be kept at a reasonable level. 
CSI estimation errors that are due to several factors such as randomly varying channel gain, feedback errors, feedback delay errors, and quantization errors~\cite{goldsmith2005wireless} affect the performance of D2D network.  
It is, therefore, essential to have low feedback distributed solutions achieving optimal channel assignment while taking into account CSI or throughput estimation errors.

\subsection{Contributions}
\begin{itemize}
\item \emph{Novel Approach:} 
Our approach is to learn the optimal channel assignments in a D2D wireless network using a noisy potential game that takes into account the estimation noise. Distributed learning in a noisy environment for CAP is novel.
We consider a Stochastic Optimization Problem (SOP) with the objective to maximize the expected sum data rate of an underlay D2D network. We translate this problem into a noisy potential game. The notion of the noisy potential game is introduced to account for the fact that only noisy estimates of the utility are available to the players. 

\item  \emph{Learning algorithm:} We propose a distributed Binary Log-linear Learning Algorithm (BLLA) for a SOP. 
BLLA solves CAP to achieve an optimal channel assignment, which corresponds to the optimal Nash equilibrium of the game.
The convergence of BLLA is proved for fixed temperature and decreasing temperature parameter. 
We provide a sufficient number of estimation samples that guarantees the convergence for both the cases of bounded noise and unbounded noise. 
Note that for SOPs, BLLA is distributed and more practical when compared to Stochastic Approximation (SA)~\cite{robbins1951stochastic}, Finite Difference SA~\cite{kiefer1952stochastic}, and Simultaneous Perturbation SA~\cite{spall1992multivariate} algorithms, which are centralized and may not be desirable in large networks.
Note that compared to BLLA the algorithm in~\cite{leslie2011equilibrium} considers only the fixed temperature.

\item \emph{Simulations results:} Extensive simulations show that BLLA achieves the maximum sum data rate of the network.
It shows that BLLA tracks well the increase of sum data rate with the increase of UEs and with the increase of the number of channels. 
We also show that contrary to better response algorithm, BLLA converges to the optimum even in presence of estimation noise.

\end{itemize}

\subsection{Related literature survey and comparison}
In this subsection, we discuss and compare different approaches for CAP in the literature.
CAP in wireless networks is a standard problem and it is known to be NP-hard~\cite{garey2002computers}. 
Extensive surveys of CAP can be found for underlay D2D networks~\cite{asadi2014survey} and in various contexts in~\cite{ahmed2014channelSurvey, audhya2011survey}.

The CAP solution approaches adopted by the state-of-the-art
channel assignment algorithms are 
dynamic programming~\cite{wang2016optimal}, graph-theoretical and heuristic solutions~\cite{maghsudi2016hybrid,gao2014joint}, game theory~\cite{song2014game,ye2015distributed,li2014coalitional,chen2014coalition,cai2014distributed,huang2014resource,xu2013efficiency},
linear programming (LP), non-linear programming (NLP), and Markov Random Field~\cite{ahmed2014channelSurvey}.
Other approaches  for CAP are neural networks~\cite{chan1994neural}, simulated annealing~\cite{duque1993channel}, tabu search,
genetic algorithms~\cite{audhya2011survey}. 

 In~\cite{wang2016optimal}, the authors jointly optimize the mode selection and
channel assignment in a cellular network with underlay
D2D communications in order to maximize the weighted sum rate.
A dynamic programming (DP) algorithm is proposed but it is exponentially complex. Therefore, a suboptimal greedy algorithm is proposed. In contrast to our approach, this solution relies on explicit closed form expressions of sum data rate for different channel fading scenarios. 
Our method can be applied to any fading scenario as it is based on users measured throughput.
 In~\cite{maghsudi2016hybrid}, a suboptimal graph-theoretical heuristic solution for CAP in D2D networks is proposed. 
The weighted signal sum is maximized using maximum-weighted bipartite matching and interference
sum is minimized using minimum-weighted partitioning. This approach is centralized since the BS uses the partial CSI of all the UEs. 
On the contrary, our approach is distributed, optimal, and maximizes the sum throughput instead of signal sum. 
 In~\cite{gao2014joint}, a heuristic algorithm is proposed for joint mode selection, channel allocation and power allocation in a D2D wireless network. Channel estimation is assumed to be perfect.
  
Different game-theoretic models such as non-cooperative games, coalition formation games, and auction games are used to study the radio resource allocation issues in D2D networks~\cite{song2014game,xu2013efficiency,ye2015distributed,li2014coalitional,chen2014coalition,cai2014distributed,huang2014resource}. In~\cite{xu2013efficiency}, a game-theoretical reverse
iterative combinatorial auction is proposed as the allocation mechanism.  
However, in this auction, the BS needs to have the bid from all the D2D links that may create a huge feedback overhead.
In~\cite{ye2015distributed}, a pricing mechanism is proposed to maximize the
network throughput under QoS constraint.
However, the algorithm proposed is a heuristic algorithm whose performance is only evaluated through simulations. In contrast, BLLA's convergence is proven theoretically and confirmed through simulations. In~\cite{li2014coalitional}, the uplink resource allocation problem for multiple D2D and cellular users
is modeled as a coalition game. 
Convergence to a Nash equilibrium is proved. However, the equilibrium may be sub-optimal and inefficient. In~\cite{chen2014coalition, cai2014distributed} also, the coalition formation algorithms proposed to jointly solve mode selection and spectrum sharing may not converge to an optimal coalition structure. The contract-based game theoretic mechanism proposed in~\cite{huang2014resource} is evaluated through simulations only. 


This paper is organized as follows. 
The system model and problem formulation are described in Section~\ref{sec:system_model}.
A noisy potential game framework is developed in Section~\ref{sec:NPG}.
BLLA and its convergence results are given in Section~\ref{sec:learning}.
Simulation and conclusions are presented in Sections~\ref{sec:simulations} and~\ref{sec:conclusions}, respectively. 
Proofs are in Appendix of our arXiv paper~\cite{aliCAP2017}.

\section{D2D Cellular Network Model}\label{sec:system_model}
In this section, we describe the D2D cellular network model as shown in~\fgr{fig:layout}.
This figure shows downlink (DL) and uplink (UL) models.
We consider a single base station (BS) and two types of UEs: 
$(i)$ cellular UEs (UECs) that communicate with the BS and 
$(ii)$ D2D UEs (UEDs) that communicate with other UEDs.
The set of UEs is denoted as $\mathcal{D}$.
We consider a set of orthogonal frequency channel bands $\mathcal{F}$. 
The UECs are assigned different channels by the BS, whereas UEDs reuse these channels.
A UE transmits on a single channel. 
The UEs that transmit on the same channel $c \in \mathcal{F}$ cause interference to each other, the amount of which depends on channel gains between transmitters and receivers.

\begin{figure}
  \centering
\includegraphics[width=0.8\linewidth]{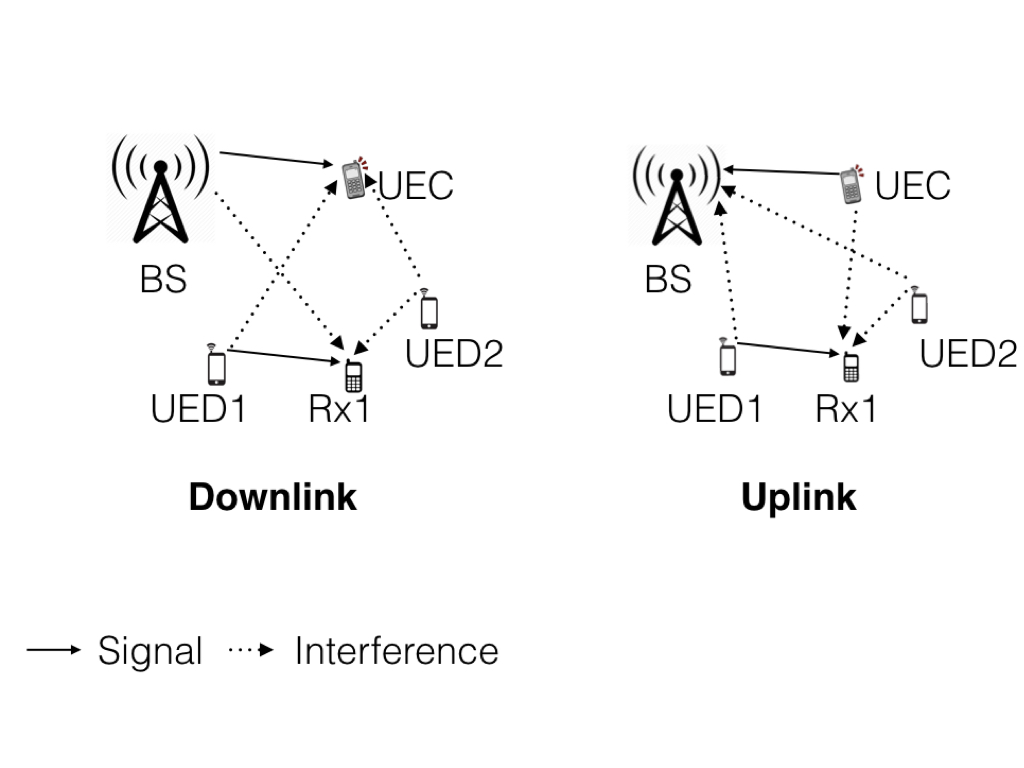}
  \caption{D2D cellular network layout model.}
  \label{fig:layout}
\end{figure}

\subsection{Channel Model}

We consider a channel model that captures the effect of path-loss, shadowing, and small-scale fading. 
Let denote $\mathcal{D}(c)$ as the set of UEs on channel $c \in \mathcal{F}$.
Let $P_i$ and $\noise$ denote the transmit power of UE $i$ and noise power, respectively.
The signal-to-interference-plus-noise ratio (SINR) at the receiver of UE $i$ on channel $c$ is given as:
\begin{equation}\label{eq:sinr}
\gamma_i(c) = \frac{P_i g_i}{\sum_{j\in\mathcal{D}(c)\backslash i}P_j g_{j,i}+\noise},
\end{equation}
where $g_i$ is the channel power gain between UE $i$ and its receiver, $g_{j,i}$ is the channel power gain between UEs $i$ and $j$. 
These $g_i$ and $g_{j,i}$ take into account the path-loss, shadowing, and small-scale fading. 
The theoretical data rate $\nu_i(c)$ of UE $i$ on the channel $c$ of bandwidth $W_c$ is given by the classical Shannon capacity formula,
$\nu_i(c)= W_c\log_2\brac{1+\gamma_i(c)}.$

Note that the channel power gains $g_i$, $g_{i,j}$ are subject to random variations. 
These variations arise due to randomly varying channel gain, feedback errors, feedback delay errors, and quantization errors~\cite{goldsmith2005wireless}.
Therefore, all the quantities defined are in fact random variables.
We denote $\hat{\nu}$ and $\nu$ as the estimated data rate and the expected data rate, respectively.

\subsection{Problem Formulation}
Our objective is to maximize the expected sum data rate of the network by assigning channels to UEs.
Let $\bar{c} = \brac{c_i,c_{-i}}$ denotes a channel assignment vector where UE $i$ is assigned the channel $c_i\in \mathcal{F}$ and UEs other than UE $i$ are assigned the channel vector $c_{-i}\in \mathcal{F}^{\abs{\mathcal{D}}-1}$. The estimated data rate of a UE depends on vector $\bar{c}$ and is denoted as $\hat{\nu}_j\brac{\bar{c}}$. The objective function is
$\hat{\phi}\brac{\bar{c}} = \sum_{j \in  \mathcal{D}}\hat{\nu}_j\brac{\bar{c}}.$
Formally, CAP is stated as: 
\begin{equation}
\begin{aligned}
\bar{c}^* \in &~  \underset{\bar{c}\in \mathcal{F}^{\abs{\mathcal{D}}}}{\argmax}
& & \label{eq:CAP} \phi\brac{\bar{c}},
\end{aligned}
\end{equation}
where $\phi\brac{\bar{c}}=\mathbb{E}[\hat{\phi}\brac{\bar{c}}]$ is the expected value over all the randomness.
We seek to maximize the average sum data rate by using only estimates of data rates. Hence, the above problem is a SOP~\cite{spall2005introduction}.
In the next sections, we develop a general solution framework for this kind of SOPs.

\section{Noisy Potential Game Framework}\label{sec:NPG}
In real scenarios, UEs don't experience the theoretical data rate and have access only to estimates of their average throughput that is corrupted by noise. 
In order to develop a distributed solution to the CAP, we model the channel assignment problem~\eqn{eq:CAP} as a stochastic game.
\begin{defn}\label{def:cap_game}[CAP game]
A CAP game is defined by the tuple $\mathcal{\hat{G}} \coloneqq \{\mathcal{D},\{\strategy_i\}_{i\belongs \mathcal{D}},\{\hat{\utility} _i\}_{i\belongs \mathcal{D}}\}$, where $\mathcal{D}$ is a set of UEs that are players of the game, $\cbrac{\strategy_i}_{i\belongs \mathcal{D}}$ are action sets consisting of orthogonal channels, $\hat{\utility} _i:\strategy\tendsto \mathcal{R}$ are random utility functions with finite expectation, and $\strategy \coloneqq \strategy_1\times\strategy_2\times\ldots\strategy_{\abs{\mathcal{D}}}$. 
\end{defn}

An action profile $a \coloneqq \brac{a_i,a_{-i}}$ where $a_i\in X_i$ is the action of player $i$ and $a_{-i} \in X_{-i}$ is the action set of all the players except player $i$. Note that the action vector $a\in X$ is the same as the channel assignment vector $\bar{c}$ and $X = \mathcal{F}^{\abs{\mathcal{D}}}$.

Potential games are attractive class of games, using which distributed solutions to optimization problems can be designed. If the objective function of the optimization problem is aligned with the potential function, then global maximizers of the objective are also the optimal Nash Equilibria (NE) of the game. The optimal NEs are the maximizers of the potential function.
Moreover, for potential games with deterministic utilities, an NE always exists and there exist algorithms that are guaranteed to converge to NEs or to the global maximizers of the potential function~\cite{monderer_potential_1996}.
Let us thus recall the definition of potential games. 
\begin{defn}\label{def:pg}[Potential game]
A game  $\mathcal{G}\coloneqq \{\mathcal{D},\{\strategy_i\}_{i\belongs \mathcal{D}},\{\utility_i\}_{i\belongs \mathcal{D}}\}$ is a (deterministic) potential game if there is a potential function $h:\strategy\rightarrow \mathcal{R}$ such that $\forall i \in \mathcal{D}$, $\forall a_i, a_i'\in \strategy_i $ and $\forall a_{-i}\in \strategy_{-i}$,
\begin{equation}\label{eq:pg}
\utility_i(a_i,a_{-i})-\utility_i(a_i',a_{-i}) = h(a_i , a_{-i}) - h( a'_i , a_{-i}).
\end{equation}
\end{defn}

This framework cannot be directly used for our CAP game because of the random utilities. We thus propose in this paper a new class of games, 
namely {\it noisy potential games}. 
\begin{defn}\label{def:npg}[Noisy potential game]
Let the expected utility of player $i$ is denoted as $\utility _i=\mathbb{E}[\hat{\utility} _i]$.
The game $\mathcal{\hat{G}} \coloneqq \{\mathcal{D},\{\strategy_i\}_{i\belongs \mathcal{D}},\{\hat{\utility} _i\}_{i\belongs \mathcal{D}}\}$  is a noisy potential game if the game $\mathcal{G}\coloneqq \{\mathcal{D},\{\strategy_i\}_{i\belongs \mathcal{D}},\{\utility _i\}_{i\belongs \mathcal{D}}\}$ is a potential game. 
\end{defn}

We now design the utility function of the CAP game so as to obtain a noisy potential game and align the potential function with the objective function of the CAP optimization problem. 
We consider the following utility function which represents the marginal contribution of player $i$ to the global utility averaged over $N$ samples:
\begin{equation}\label{eq:mean_utility_noisy}
\hat{\utility}^N _i(a_i,a_{-i}) = \frac{1}{N}\sum_{k=1}^N\hat{\utility} _{i,k}(a_i,a_{-i}),
\end{equation} where $N$ is the number of estimation samples, and $\hat{\utility} _{i,k}$ is given by:
\begin{equation}\label{eq:utility_noisy}
\hat{\utility}_{i,k}(a_i,a_{-i}) = \sum_{j\in \mathcal{D}(a_i)}\hat{\nu}_j(a_i,a_{-i})-\sum_{j\in \mathcal{D}(a_i)\backslash i}\hat{\nu}_j(a_i,a_{-i}),
\end{equation}
where $\hat{\nu}_j$ is the measured data rate of player $j$ and 
 $\mathcal{D}(a_i) = \cbrac{j\in\mathcal{D}: a_j = a_i}$ is the set of UEs using the same channel as $i$.
Note that random utility $\hat{\utility}_{i,k}$ may have a large variance but the variance of the utility $\hat{\utility}^N _i$ can be reduced by increasing the number of samples $N$.
We will see in the next section that the number of samples $N$ must be designed carefully so as to preserve the convergence properties of potential games. 

We have the following result. The proof is straightforward.
\begin{proposition}
A CAP game $\mathcal{\hat{G}}^N  \coloneqq \{\mathcal{D},\{\strategy_i\}_{i\belongs \mathcal{D}},\{\hat{\utility}^N _i\}_{i\belongs \mathcal{D}}\}$ with utilities defined in (\ref{eq:mean_utility_noisy}), (\ref{eq:utility_noisy}) is a noisy potential game with potential function $\phi\brac{a}$.
\end{proposition}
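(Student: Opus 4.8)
The plan is to unwind the two definitions involved. By Definition~\ref{def:npg}, $\mathcal{\hat{G}}^N$ is a noisy potential game with potential $\phi$ exactly when the \emph{deterministic} game $\mathcal{G}^N$ obtained by replacing each $\hat{\utility}^N_i$ with its expectation $\utility^N_i \coloneqq \mathbb{E}[\hat{\utility}^N_i]$ is an ordinary potential game with potential $\phi$ in the sense of~\eqn{eq:pg}. So the whole task reduces to verifying the potential identity for the averaged expected utilities.

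First I would push the expectation through~\eqn{eq:mean_utility_noisy}: by linearity, $\utility^N_i(a) = \frac1N\sum_{k=1}^N\mathbb{E}[\hat{\utility}_{i,k}(a)]$, and since the $N$ estimation samples are identically distributed, every term equals $\utility_i(a) \coloneqq \mathbb{E}[\hat{\utility}_{i,1}(a)]$; hence $\utility^N_i = \utility_i$, independently of $N$. This is what makes the claim ``straightforward'': the sample averaging plays no role at the level of expectations (its purpose, discussed in the next section, is only to control the variance seen by the learning dynamics), so it suffices to analyse the single-sample expected game $\{\utility_i\}$.

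Next I would recognise $\utility_i$ as the marginal (``wonderful-life'') contribution of player $i$ to the social welfare $\phi$. Interpreting the second sum in~\eqn{eq:utility_noisy} as the aggregate rate of $i$'s co-channel users \emph{when $i$ is silent}, and using that in the SINR model~\eqn{eq:sinr} player $i$ interferes only with users sharing its channel (so removing $i$ leaves every other user's rate unchanged), one gets $\hat{\utility}_{i,k}(a) = \hat{\phi}(a) - \hat{\phi}(a_i^{\emptyset},a_{-i})$, with $a_i^{\emptyset}$ the ``not transmitting'' action. Taking expectations and using $\phi = \mathbb{E}[\hat{\phi}]$ gives $\utility_i(a_i,a_{-i}) = \phi(a_i,a_{-i}) - \phi(a_i^{\emptyset},a_{-i})$. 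The potential condition~\eqn{eq:pg} then falls out immediately: for any $a_i,a_i'\in\strategy_i$ and any $a_{-i}\in\strategy_{-i}$ the baseline term $\phi(a_i^{\emptyset},a_{-i})$ is identical on the two sides (it does not depend on $i$'s own channel), so it cancels and $\utility_i(a_i,a_{-i}) - \utility_i(a_i',a_{-i}) = \phi(a_i,a_{-i}) - \phi(a_i',a_{-i})$.

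There is no substantial obstacle — the statement is essentially the classical fact that marginal-contribution (wonderful-life) utilities induce a potential game whose potential is the global objective, combined with linearity of expectation. The two points I would be careful to state explicitly are: (i) that the estimation samples are identically distributed, which is exactly what lets the empirical mean~\eqn{eq:mean_utility_noisy} retain the correct expectation; and (ii) the locality of interference in~\eqn{eq:sinr}, which is what guarantees that the subtracted baseline $\phi(a_i^{\emptyset},a_{-i})$ depends only on $a_{-i}$ and hence cancels in~\eqn{eq:pg}.
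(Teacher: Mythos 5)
Your proof is correct and is exactly the argument the paper leaves implicit when it declares the proof ``straightforward'': linearity of expectation reduces everything to the single-sample expected game, and the utility \eqn{eq:utility_noisy} is the marginal-contribution (wonderful-life) utility, whose baseline term $\phi(a_i^{\emptyset},a_{-i})$ is independent of $a_i$ and therefore cancels in \eqn{eq:pg}, yielding the potential identity with potential $\phi$. You also correctly resolved the one genuine ambiguity — the second sum in \eqn{eq:utility_noisy} must be read as the co-channel users' rates with player $i$ removed (otherwise the expression collapses to $\hat{\nu}_i$ alone), and your appeal to the locality of interference in \eqn{eq:sinr} is precisely what justifies restricting both sums to $\mathcal{D}(a_i)$.
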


In the rest of the paper, we consider the CAP noisy potential game $\mathcal{\hat{G}}^N$.

\section{Learning in Presence of Noise}\label{sec:learning}
In this section, we first describe the proposed binary log-linear algorithm (BLLA) for learning in the presence of noise. Then, we give the results on convergence of BLLA.

The details of BLLA are described in Algorithm~\ref{alg:bllla} and shown in~\fgr{fig:timeslotfig}. 
Each time slot is divided into two phases of size $N$ samples each (Phase I and Phase II). 
At the beginning of each time slot $t$, BS randomly selects a player $i$ and a trial action  $\hat{a}_i \in X_i$ with uniform probability. 
Also, BS informs all the players $j$ such that $a_j(t-1)\in\{a_i(t-1),\hat{a}_i\}$ to estimate their data rates and feedback this information to the BS at the end of the two phases.
Player $i$ plays action $a_i(t-1)$ and $\hat{a}_i$  during Phase I and Phase II, respectively.
At the end of Phase II, all players on $a_i(t-1)$ and $\hat{a}_i$  feedback to the BS
their two estimates of their sampled mean data rates corresponding to Phases I and II.
BS calculates the utility of player $i$ according to~\eqn{eq:mean_utility_noisy} and 
selects an action from the set $\cbrac{a_i(t-1),\hat{a}_i}$ according to~\eqn{eq:lll1},
where $\tau(t)$ is a temperature parameter that governs the convergence properties of BLLA.
Then, BS informs player $i$ with the selected action.
This feedback requires only one bit. 
BLLA is distributed in nature because only a few players have to feedback to the BS.

\begin{figure}
  \centering
\includegraphics[width=01\linewidth]{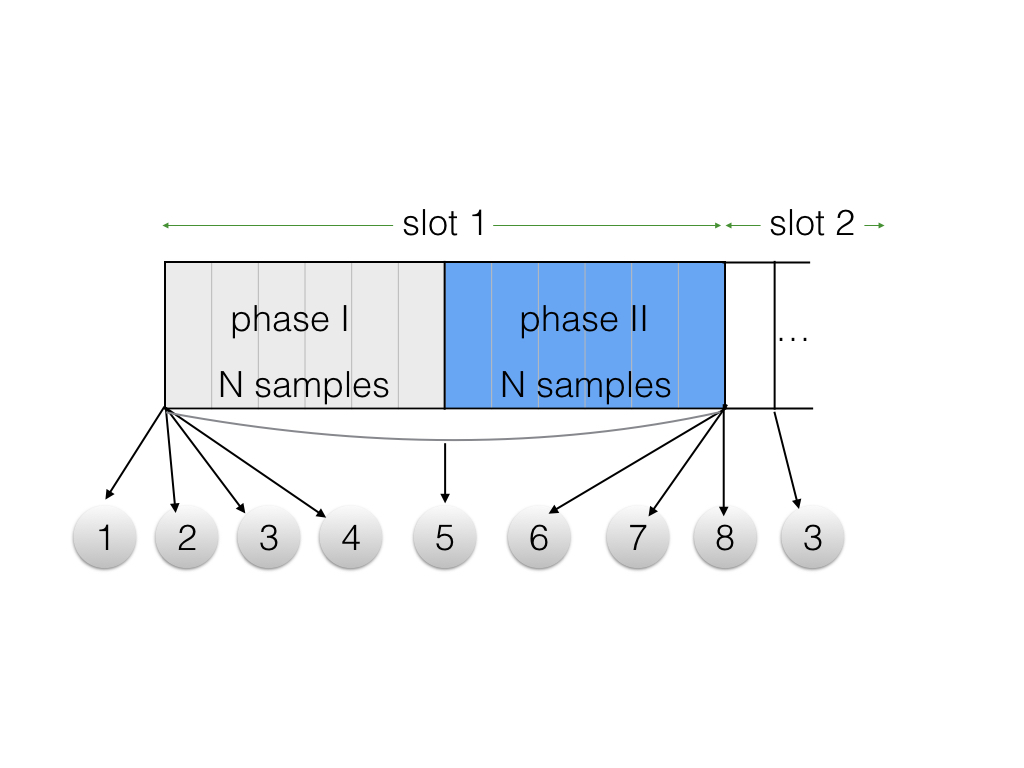}
  \caption{Time slots, phases, and steps of BLLA are shown in this figure. Steps are represented using circles. Step~5 is estimation duration of the UEs on two channels $a_i(t-1)$ and $\hat{a}_i$.}
  \label{fig:timeslotfig}
\end{figure}

\begin{algorithm}
\caption{Binary Log-linear Learning Algorithm}\label{alg:bllla}
\begin{algorithmic}[1]

\State {\bf Initialisation:} Start with arbitrary action profile  $a$.
\State {\bf While $t \geq1$ do}
\State Set parameter $\tau(t)$.
\State BS randomly selects a player $i$ and a trial action  $\hat{a}_i \in X_i$ with uniform probability. 
BS informs player $i$ and all the players with actions $a_i(t-1)$ and $\hat{a}_i$ 
to estimate their sample mean data rates.
\State Player $i$ plays action $a_i(t-1)$ and $\hat{a}_i$  during Phase I and Phase II, respectively.
\State At the end of Phase II, all players with actions $a_i(t-1)$ and $\hat{a}_i$  feedback to BS
their two estimates of their sample mean data rates corresponding to Phases I and II.
\State BS calculates $\hat{U}^N_i\brac{a(t-1)}$, $\hat{U}^N_i\brac{\hat{a}_i,a_{-i}(t-1)}$, and selects action 
$\hat{a}_i$ with probability 
\begin{equation}\label{eq:lll1}
\brac{1+e^{\Delta^N_i/\tau}}^{-1},
\end{equation}
where $\Delta^N_i = \hat{U}^N_i\brac{a(t-1)}-\hat{U}^N_i\brac{\hat{a}_i,a_{-i}(t-1)}$.
\State BS informs player $i$ to play the selected action. All the other players repeat their previous actions, i.e., $a_{-i}(t)= a_{-i}(t-1)$. 

\end{algorithmic}

\end{algorithm}

\subsection{Convergence of BLLA}
In this subsection, we present the results of convergence of BLLA for both the cases of bounded and unbounded noise.
BLLA generates an irreducible Markov chain over the action space of the CAP game $\mathcal{\hat{G}}^N$. 
However, as the parameter $\tau$ goes to zero, the stationary distribution concentrates on a few states. 
The states whose limit probability is strictly positive as $\tau$ goes to zero are called stochastically stable. 
It is known that for exact potential games the stochastically stable states of BLLA are the maximizers of the potential function~\cite{marden2010revisiting}.
 We extend this result to noisy potential games.

\begin{theorem}\label{thm:conv_blla}
The stochastically stable states of BLLA are the global maximizers of the potential function $\phi(a)$ if one of the following holds.
\begin{enumerate}
\item The estimation noise is bounded in an interval of size $\ell$ and the number of estimation samples used are
\begin{equation}\label{eq:samplesBounded}
N \geq \brac{\log\brac{\frac{4}{\xi}}+\frac{2}{\tau}}\frac{\ell^2}{2\brac{1-\xi}^2\tau^2},
\end{equation}
where $0<\xi <1$.

\item The estimation noise is unbounded with finite mean and variance.
Let $M(\theta)$ be moment generating function of noise. 
Assume that $M(\theta)$ is finite. Let $\theta^* = \argmax_{\theta}\theta\brac{1-\xi}\tau-\log\brac{M(\theta)}$.
The number of samples used are
\begin{equation}\label{eq:samplesUnbounded}
N \geq \frac{\log\brac{\frac{4}{\xi}}+\frac{2}{\tau}}{\log\brac{\frac{e^{\theta^*\brac{1-\xi}\tau}}{M(\theta^*)}}}.
\end{equation}
\end{enumerate}
\end{theorem}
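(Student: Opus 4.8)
The plan is to show that noisy BLLA run at temperature $\tau$ with $N=N(\tau)$ estimation samples is a regular perturbed Markov process on the finite action set $X$ whose edge resistances \emph{coincide} with those of ordinary log-linear learning on the exact potential game with potential $\phi$; the statement then follows from the resistance-tree characterization of stochastically stable states (as used for log-linear learning in~\cite{marden2010revisiting,monderer_potential_1996}), which for that resistance structure identifies the minimizers of the stochastic potential with $\argmax_{a}\phi(a)$. Concretely, for a fixed $\tau$ the chain is irreducible and aperiodic (self-loops are present); a nontrivial transition $a\to b$ occurs only when $b=\brac{\hat a_i,a_{-i}}$ differs from $a$ in the action of the randomly selected player $i$, and then
\[
\prob{a\to b}=\frac{1}{\abs{\mathcal{D}}\,\abs{X_i}}\,\expect{\brac{1+e^{\Delta^N_i/\tau}}^{-1}},
\]
where the expectation is over the estimation samples and, by the Proposition of Section~\ref{sec:NPG}, $\expect{\Delta^N_i}=\utility_i(a)-\utility_i(b)=\phi(a)-\phi(b)=:\Delta$. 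Writing $\Delta^N_i=\Delta+\epsilon_N$ with $\epsilon_N$ a zero-mean average of $N$ i.i.d.\ noise terms, the goal is $\lim_{\tau\to0}\brac{-\tau\log\prob{a\to b}}=\sbrac{\phi(a)-\phi(b)}^+$, the ideal resistance.

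I would obtain this by a two-sided estimate of $\prob{a\to b}$. For the upper bound (which lower-bounds the resistance), use $\brac{1+e^{x/\tau}}^{-1}\le e^{-x/\tau}$ to get $\prob{a\to b}\le \tfrac{1}{\abs{\mathcal{D}}\abs{X_i}}e^{-\Delta/\tau}\expect{e^{-\epsilon_N/\tau}}$, and control the correction $\expect{e^{-\epsilon_N/\tau}}$ by Hoeffding's lemma (bounded case) or by the moment generating function $M$ (unbounded case); conditions~\eqn{eq:samplesBounded} and~\eqn{eq:samplesUnbounded} force this factor to be $e^{o(1/\tau)}$, so it leaves the resistance unchanged. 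For the lower bound (which upper-bounds the resistance) convexity fails, so Jensen is useless; instead restrict the expectation to $\mathcal{E}=\cbrac{\epsilon_N\le(1-\xi)\tau}$, on which $\brac{1+e^{\Delta^N_i/\tau}}^{-1}\ge\tfrac12 e^{-(1-\xi)}e^{-\Delta/\tau}$ when $\Delta>0$ (when $\Delta\le0$ the ideal resistance is $0$ and $\prob{a\to b}$ stays bounded below, since $\epsilon_N\to0$ faster than $\tau$ under these $N$ and any nonzero potential gap exceeds a fixed $\delta^*>0$). The sample-size conditions are precisely the inversions of the Hoeffding, respectively Chernoff--Cram\'er ($\sup_\theta\brac{\theta(1-\xi)\tau-\log M(\theta)}$ being the relevant rate, attained at $\theta^*$), tail bound $\prob{\epsilon_N>(1-\xi)\tau}\le\tfrac{\xi}{4}e^{-2/\tau}$, so $\prob{\mathcal{E}}\to1$ and $\prob{a\to b}\gtrsim e^{-\Delta/\tau}$. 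Combining the bounds over the finitely many adjacent pairs yields resistances identical to the exact-potential-game ones, and the resistance-tree argument gives that the stochastically stable states are $\argmax_a\phi(a)$ --- the fixed-temperature statement. For a decreasing schedule $\tau(t)\downarrow0$ with $N(t)=N(\tau(t))$, one appends the standard annealing argument: the exponentially small (in $1/\tau$) failure probability above makes the time-inhomogeneous chain strongly ergodic provided $\tau(t)$ decreases sufficiently slowly (the familiar $\tau(t)\ge c/\log t$ condition), and its law then concentrates on $\argmax_a\phi$.

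I expect the lower-bound direction to be the main obstacle: bounding $\expect{\brac{1+e^{\Delta^N_i/\tau}}^{-1}}$ from below without convexity, and calibrating the truncation level $(1-\xi)\tau$ so that (a) on $\mathcal{E}$ one loses only a harmless $e^{-O(1)}$ factor rather than an $e^{-\Theta(1/\tau)}$ one, and (b) the complementary event is rare enough never to resurrect a spurious low-resistance transition --- matching both requirements is exactly what produces the $\ell^2/\tau^3$-type growth of $N$ in~\eqn{eq:samplesBounded} and the rate-function form of~\eqn{eq:samplesUnbounded}. A minor point is that the regularity condition must be used in the relaxed form $\prob{a\to b}\asymp e^{-r(a,b)/\tau}$ (bounded above and below by constant multiples), since the factors $1/(\abs{\mathcal{D}}\abs{X_i})$, $\tfrac12$, $e^{-(1-\xi)}$ are not exactly $1$; this does not affect resistances. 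Finiteness of $X$ makes the remaining uniformity over adjacent pairs and over the values of $\Delta$ routine.
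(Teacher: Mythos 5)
Your proposal follows essentially the same route as the paper's proof: show that with $N(\tau)$ samples the noisy transition probability is sandwiched between constant multiples of the noiseless log-linear probability up to an additive $e^{-2/\tau}$ error (controlled by Hoeffding, resp.\ Chernoff, at deviation level $(1-\xi)\tau$, which is exactly how the sample-size formulas arise), conclude that the resistances coincide with those of the exact potential game, and invoke the resistance-tree characterization of stochastically stable states, with the standard weak-ergodicity/annealing argument for the decreasing schedule. The only differences are cosmetic: the paper obtains the two-sided sandwich from a Lipschitz bound on the sigmoid (Lemma~\ref{lem:diff_pN_i_p_i}) and formalizes the order-of-magnitude comparison through a small calculus of resistances (Lemma~\ref{lem:comp_func}), whereas you use $(1+e^{x/\tau})^{-1}\le e^{-x/\tau}$ together with the moment-generating-function bound for the upper estimate and direct conditioning on the good event for the lower estimate.
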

\begin{IEEEproof}
See Appendix~\ref{app:conv_blla}.
\end{IEEEproof}


A small $N$ is desired for practical implementations. We choose the lowest $N$ that satisfies Theorem~\ref{thm:conv_blla}. 
In Theorem~\ref{thm:conv_blla}, we have a convergence in probability for fixed parameter $\tau$. In Theorem~\ref{thm:decTau}, we consider the case of decreasing parameter $\tau$ for which we obtain an almost sure convergence to optimal state as in simulated annealing with cooling schedule~\cite{hajek1988cooling}.
\begin{theorem}\label{thm:decTau}
  Consider BLLA with a decreasing parameter $\tau(t) = 1 / \log(1+t)$, and the number of samples $N(\tau)$ is given by Theorem~\ref{thm:conv_blla}. Then, BLLA converges with probability $1$ to the global maximizer of the potential function. 
\end{theorem}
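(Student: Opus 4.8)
The plan is to analyze BLLA as a time-inhomogeneous Markov chain $\{a(t)\}_{t\ge 1}$ on the finite action space $X=\mathcal{F}^{\abs{\mathcal{D}}}$, whose one-step kernel $P_t$ is the update of Algorithm~\ref{alg:bllla} executed at temperature $\tau(t)=1/\log(1+t)$ with $N(t)=N(\tau(t))$ samples, where $N(\cdot)$ is taken from Theorem~\ref{thm:conv_blla} (say, the bounded-noise bound~\eqn{eq:samplesBounded}) with a fixed $\xi\in(0,1)$. The convergence statement is then obtained by combining two ingredients: (i) a \emph{uniform} description of the frozen-temperature kernels $\{P_{\tau,N(\tau)}\}_{\tau>0}$ as a regular perturbation of the unperturbed better-reply dynamics whose selected (stochastically stable) set is $\argmax_a\phi(a)$ --- essentially the content of Theorem~\ref{thm:conv_blla}; and (ii) the classical convergence theory of simulated annealing with a logarithmic cooling schedule~\cite{hajek1988cooling,marden2010revisiting}.

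First I would fix $\tau$ and $N\ge N(\tau)$ and record that the BLLA kernel is irreducible and aperiodic, and that Theorem~\ref{thm:conv_blla} already identifies its stochastically stable states as the global maximizers of $\phi$. The point to extract --- and the delicate one --- is \emph{uniformity}: the resistance (energy) landscape governing this perturbation, in particular the maximal depth $D$ of its non-optimal traps, depends only on the potential $\phi$ and not on $N$ once $N\ge N(\tau)$, because the role of the sample-size schedule in Theorem~\ref{thm:conv_blla} is precisely to keep each noisy acceptance probability within a $\xi$-controlled multiplicative window of its ideal log-linear value. This is what licenses treating $\{P_t\}$ as a bona fide annealing chain even though here the kernel itself, not merely a temperature weight on a fixed transition graph, varies with $t$.

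Next I would invoke the annealing convergence result: an inhomogeneous chain built from such kernels with $\tau(t)\downarrow 0$ has its marginal law concentrate on the stochastically stable set whenever $\sum_t\exp(-D/\tau(t))=\infty$. For $\tau(t)=1/\log(1+t)$ this sum is $\sum_t(1+t)^{-D}$, which diverges provided $D\le 1$ --- equivalently, after normalizing $\phi$ (or, what amounts to the same, the utilities and temperature inside the algorithm) so that the landscape depth does not exceed the $1/\log(1+t)$ cooling rate, which is the hypothesis implicit in the stated schedule. Finally, to upgrade this to the claimed almost-sure convergence to $\argmax_a\phi(a)$, I would use a Borel--Cantelli argument: from an optimal state, the probability of leaving $\argmax_a\phi(a)$ at step $t$ is at most a constant times $(1+t)^{-\delta}$, where $\delta>0$ is the minimal strictly-positive potential gap (normalized as above), since one pays a logit factor $(1+e^{\delta/\tau(t)})^{-1}\le (1+t)^{-\delta}$, inflated in the noisy version only by the $\xi$-factor of Theorem~\ref{thm:conv_blla}; when this bound is summable, only finitely many departures occur with probability one, so the trajectory is eventually absorbed in $\argmax_a\phi(a)$.

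I expect the main obstacle to be ingredient (i): showing that the two schedules that vary simultaneously --- the temperature $\tau(t)$ and the sample size $N(t)$ --- combine into a kernel family to which the logarithmic-cooling theorem still applies. Ordinary simulated-annealing proofs hold the transition graph fixed and merely rescale acceptance weights; here the noisy estimates make the kernel both $\tau$- and $N$-dependent, so one must show the resistance structure is stable along the schedule and that the noise error is dominated, uniformly in $t$, by the gap guaranteed by the choice $N\ge N(\tau)$ in Theorem~\ref{thm:conv_blla}. A secondary and more routine point is checking that $N(\tau(t))$ --- which grows only polylogarithmically in $t$ by~\eqn{eq:samplesBounded} and~\eqn{eq:samplesUnbounded} --- breaks neither the divergence of $\sum_t\exp(-D/\tau(t))$ nor the summability in the Borel--Cantelli step, and that the unbounded-noise case goes through verbatim with~\eqn{eq:samplesUnbounded} in place of~\eqn{eq:samplesBounded}.
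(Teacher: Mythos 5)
Your architecture --- frozen-kernel resistance analysis from Theorem~\ref{thm:conv_blla} combined with a Hajek-style logarithmic-cooling theorem and a final Borel--Cantelli absorption step --- is a genuinely different route from the paper's, but the last step contains a fatal gap. Your absorption argument requires the one-step departure probability from the optimal set, of order $e^{-\delta/\tau(t)}=(1+t)^{-\delta}$ with $\delta$ the minimal strictly positive potential gap, to be summable, i.e.\ $\delta>1$. But the very normalization you invoke to make the trap depth $D\le 1$ (so that $\sum_t e^{-D/\tau(t)}$ diverges and the annealing theorem applies) forces all potential differences, hence $\delta$, into $(0,1]$; the paper indeed normalizes utilities by $\phi_{\max}$. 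Then $\sum_t(1+t)^{-\delta}$ diverges, the first Borel--Cantelli lemma is unavailable, and at this critical cooling rate one expects the trajectory to leave $\argmax_a\phi(a)$ infinitely often rather than be eventually absorbed. The conclusion one can actually reach is convergence of the marginal law, $\prob{a(t)\in\argmax_a\phi(a)}\to 1$, and this is precisely what the paper establishes by a different mechanism: weak ergodicity of the inhomogeneous chain (Lemma~\ref{lem.weak_ergo}, via Theorem~2 of~\cite{anily1987simulated}) together with bounded variation of the frozen stationary distributions $\pi(\tau)$ (Lemma~\ref{lem:bound_deriv}), which yields strong ergodicity by Theorem~1 of~\cite{anily1987ergodicity}.

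The obstacle you correctly flag in your ingredient (i) is also exactly where the paper's proof does its real work, and your sketch does not discharge it. Knowing the resistances of the noisy acceptance probabilities $p^{N(\tau)}(\tau)$ (the content of Theorem~\ref{thm:conv_blla}) is not sufficient when both $\tau(t)$ and $N(\tau(t))$ vary along the schedule: the weak-ergodicity criterion of~\cite{anily1987simulated} additionally needs monotonicity of $\tau\mapsto p^{N(\tau)}(\tau)$ near $\tau=0$ and control of the resistance of $\abs{p'(\tau)}$, which the paper obtains in Lemmas~\ref{lem:p_deriv_prop} and~\ref{lem:monotony} by a separate analysis of $\partial f/\partial\tau$ conditioned on the event $A^\epsilon$ with $\epsilon=\tau^2$. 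Likewise, the bounded-variation condition is obtained from the Markov-chain tree theorem~\cite{anantharam1989proof} together with the combinatorial fact (Lemma~\ref{lem:min_res_tree}) that a minimum-resistance tree contains a zero-resistance transition. None of these steps appears in your outline, so even the convergence-in-probability conclusion is not yet justified by your argument, and the almost-sure absorption claim as stated would fail.
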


\begin{proof}
  See Appendix~\ref{app:conv_blla_decreasing}.
\end{proof}

\section{Simulations}\label{sec:simulations}
\begin{table}\label{tbl:sim_parameters}
\caption{Simulation parameters.} 
\centering 
\begin{tabular}{l c l c l c |} 
\hline 
\bf Parameter & \bf Variable &\bf Value \\  
\hline 
Number of orthogonal channels & $\mathcal{F}$ & 5\\ \hline
Channel bandwidth & $W_c$ & 180~KHz\\ \hline 
Carrier frequency & $f_c$ & 2.6~GHz \\ \hline
Number of UEs & $ \abs{\mathcal{D}} $ & 20 \\ \hline
Total transmit power of BS &$ P_{\text{BS}}$ & 46 dBm\\ \hline
Transmit power of UE &$ P_{\text{UE}}$ & 25 dBm\\ \hline
Minimum SINR &$\gamma_{\min}$ & -10 dB\\ \hline
Maximum SINR &$\gamma_{\max}$ & 23 dB\\ \hline

Additive noise power per channel &$\noise$ & $-174+10\log\brac{W_i}$ dBm\\ \hline
Path-loss exponent &$ \eta$ & 3.5\\ \hline
Shadowing variance & $\sigma_{sh}$ & 6 \\ \hline
\end{tabular}
\label{tbl:sim_para} 
\end{table}

\subsection{Simulation Parameters}

In this Section, we present simulation results on the downlink considering standard wireless system parameters shown in~\tbl{tbl:sim_para}. 
We consider that a BS is located at the center of a region of radius $200$~m.
Among 20 UEs there are 5 UECs.
The UECs have dedicated channels and no two UECs are on the same channel.
These UECs serve as passive players of the game because they do not change their channel.
The receivers of UED transmitters are located around them uniformly random over a region of radius $20$~m.
The UEDs learn their channel with the help of the BS and hence are the active players of the game.

The variations of Rayleigh fading over time are considered as the noise component for all the simulations. 
Channel variations result in random UEs data rates. The data rates are bounded because the SINR is bounded between $\gamma_{\min}$ and $\gamma_{\max}$. We assume only bounded noise in the simulations where the noise interval is set to $\ell=1$ due to normalized utilities. Besides, the additive white Gaussian noise with power $\noise$ is considered. 

\subsection{Simulation Results}
In~\fgr{fig:conv_FixTau}, convergence to the maximum sum data rate of BLLA is shown with fixed temperature $\tau=0.1$ and decreasing temperature $\tau(t) = 0.1/\log\brac{1+t}$\footnote{Note that $\tau(t) = 0.1/\log\brac{1+t}$ works well even though it is smaller than that given by Theorem~\ref{thm:decTau}. The reason being that the height of the highest local maximum is smaller than the global maximum~\cite{hajek1988cooling}. We consider that height to be $10\%$ of the global maximum, which is reasonable.}. 
The number of samples are calculated according to~\eqn{eq:samplesBounded} for $\tau = 0.1$ and $\xi = 10^{-5}$. BLLA reaches the maximum sum data rate with both fixed and decreasing temperatures. However, it has more variations for fixed temperature. For decreasing temperature, the probability of staying at the maximum is higher.
\begin{figure}[t]
  \centering
\includegraphics[width=0.9\linewidth]{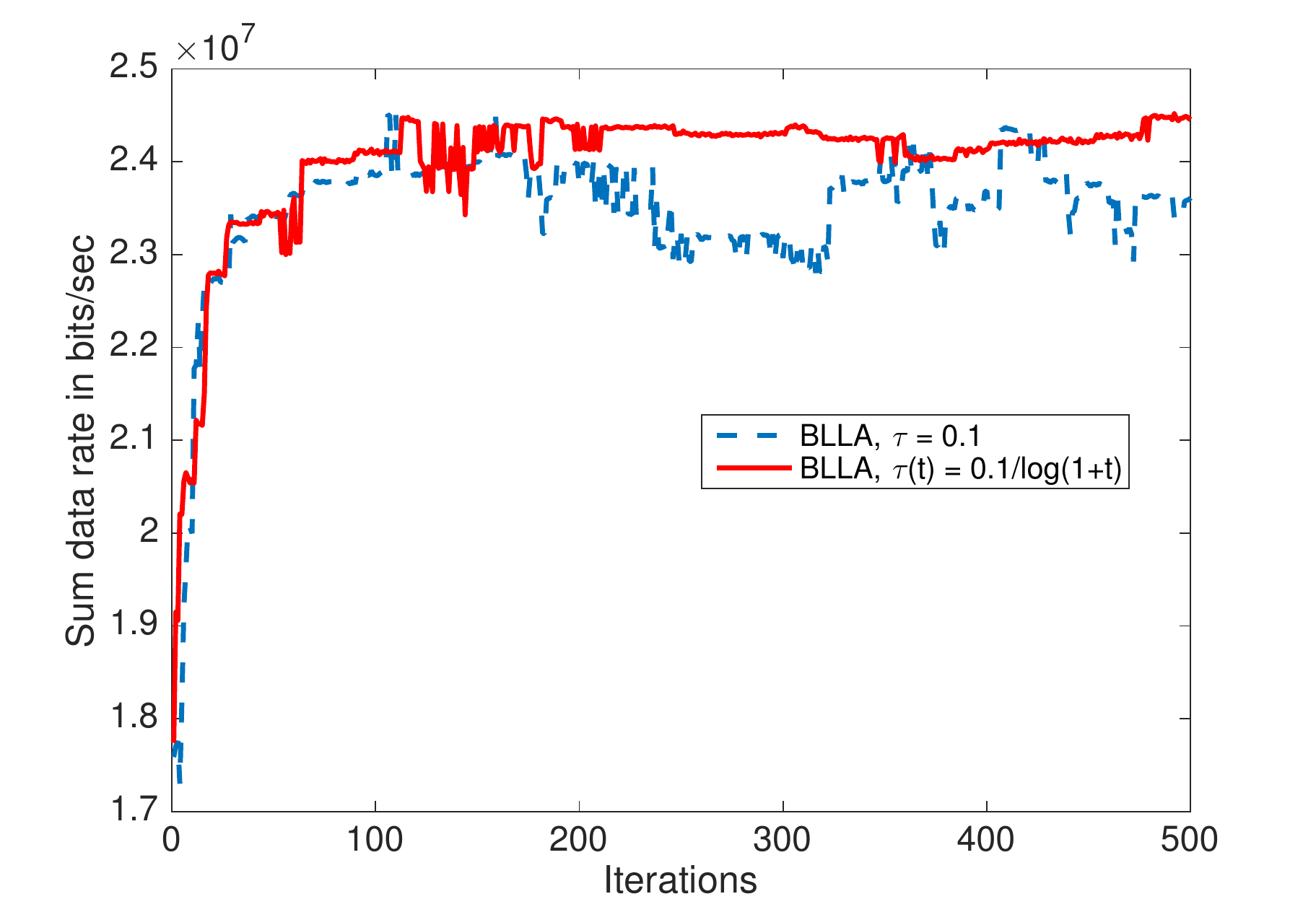}
  \caption{Convergence of BLLA for fixed temperature and decreasing temperature.}
  \label{fig:conv_FixTau}
\end{figure}

To study the effect of temperature we show the performance of BLLA for different temperatures in~\fgr{fig:conv_diff_FixTau}. As before, the samples are calculated corresponding to different temperatures. 
For higher temperature $\tau = 0.5$, BLLA exhibits huge variations. 
The probability of being at a local maximum decreases with increasing temperature. As temperature decreases, the variations also decrease. Also, the probability of being at a local maximum increases with decreasing temperature. Therefore, the temperature should be chosen carefully to obtain the desired performance. BLLA with smaller $\tau = 0.05$ gives the desired performance. 

\begin{figure}[t]
  \centering
\includegraphics[width=0.9\linewidth]{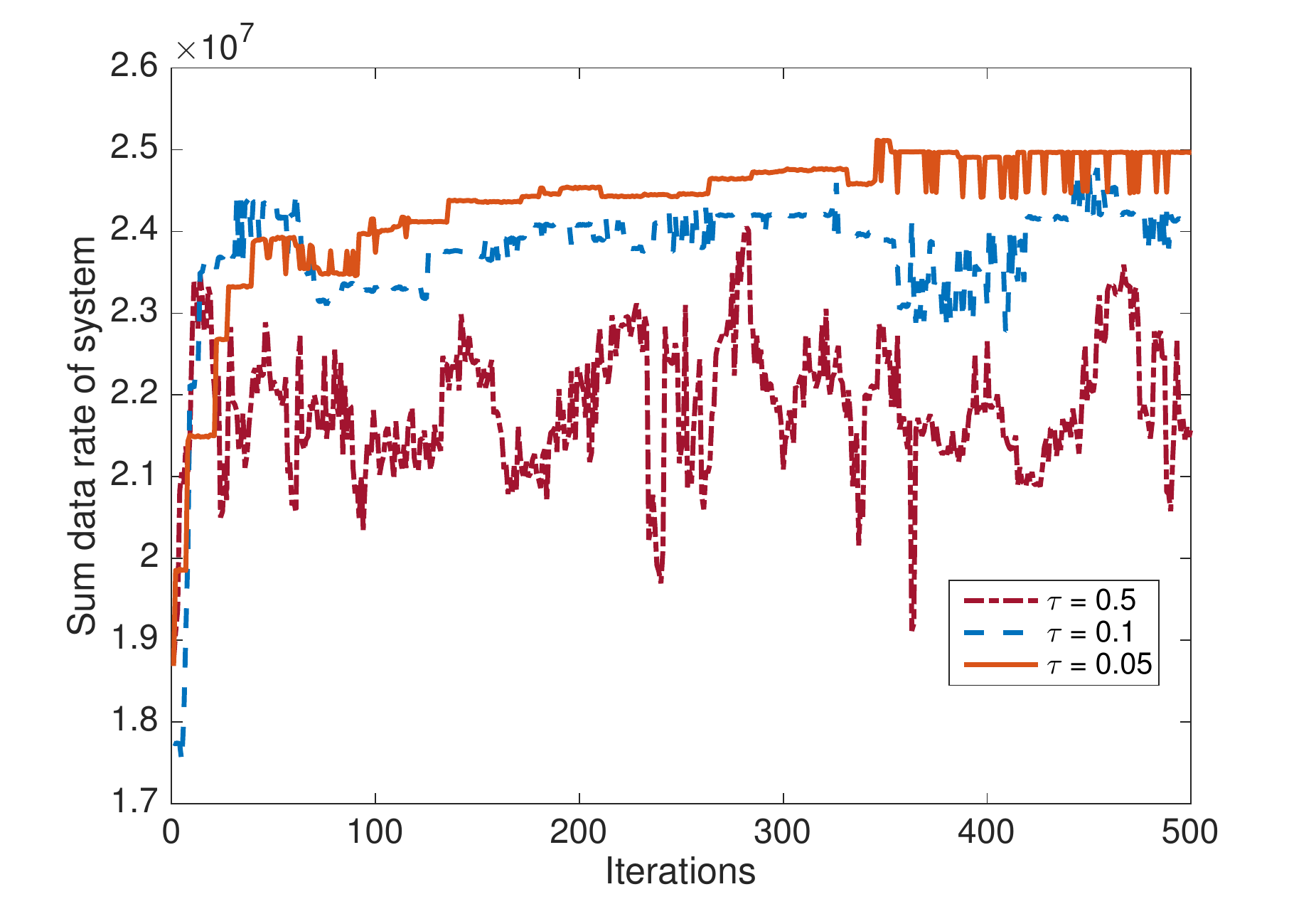}
  \caption{Convergence of BLLA for different fixed temperatures.}
  \label{fig:conv_diff_FixTau}
\end{figure}

We study the effect of the number of samples on the performance of BLLA  in~\fgr{fig:conv_diff_samples}.
If players take decisions after every single sample, i.e., if the estimation errors are ignored, BLLA exhibits large variations due to noise.
As the number of samples increases the performance of BLLA improves. 
If the number of samples is taken according to Theorem~\ref{thm:conv_blla} then BLLA provides high and stable sum data rate.
\begin{figure}[t]
  \centering
\includegraphics[width=0.9\linewidth]{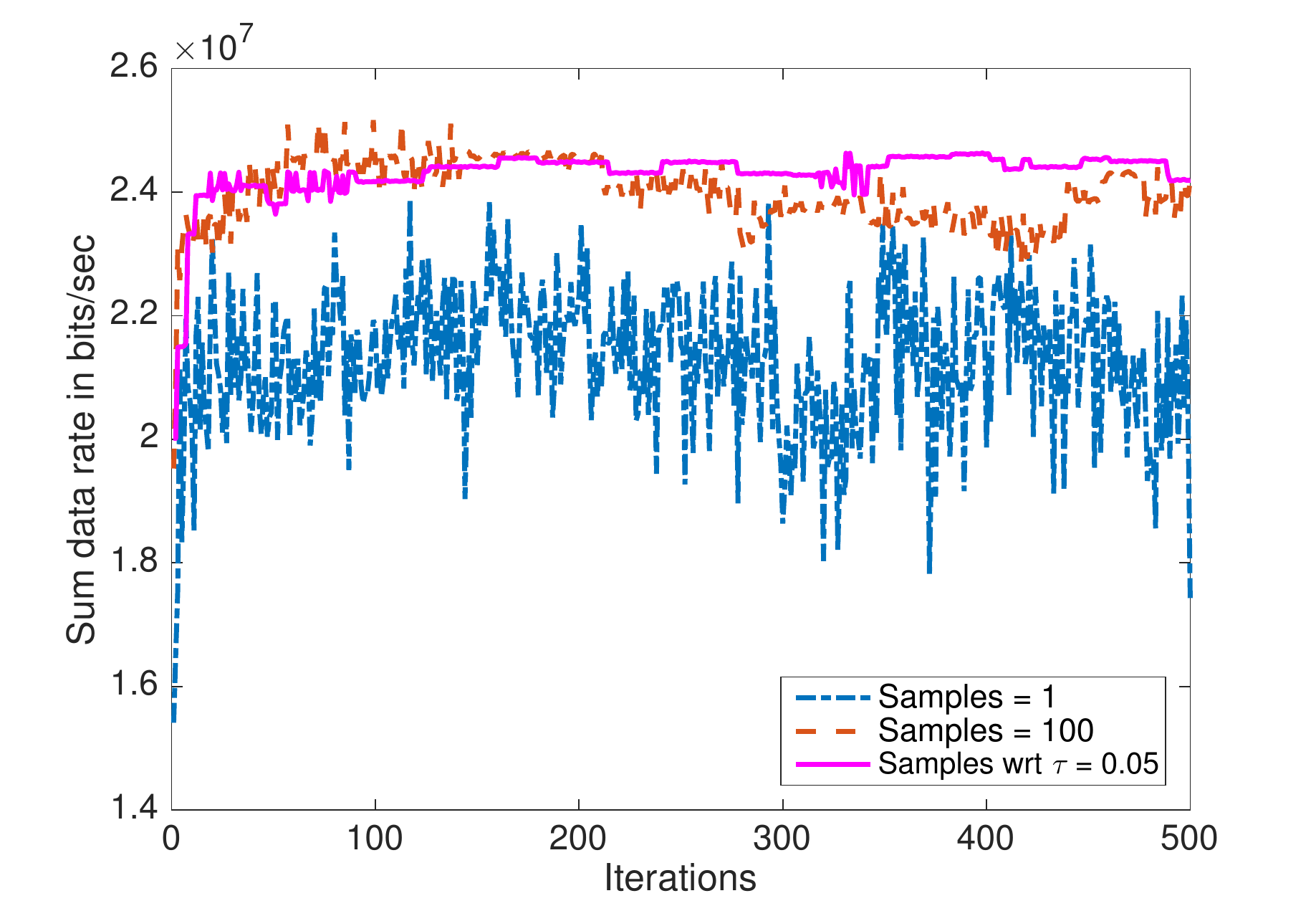}
  \caption{Effect of the number of samples on the convergence of BLLA.}
  \label{fig:conv_diff_samples}
\end{figure}

\begin{figure}[t]
  \centering
\includegraphics[width=0.9\linewidth]{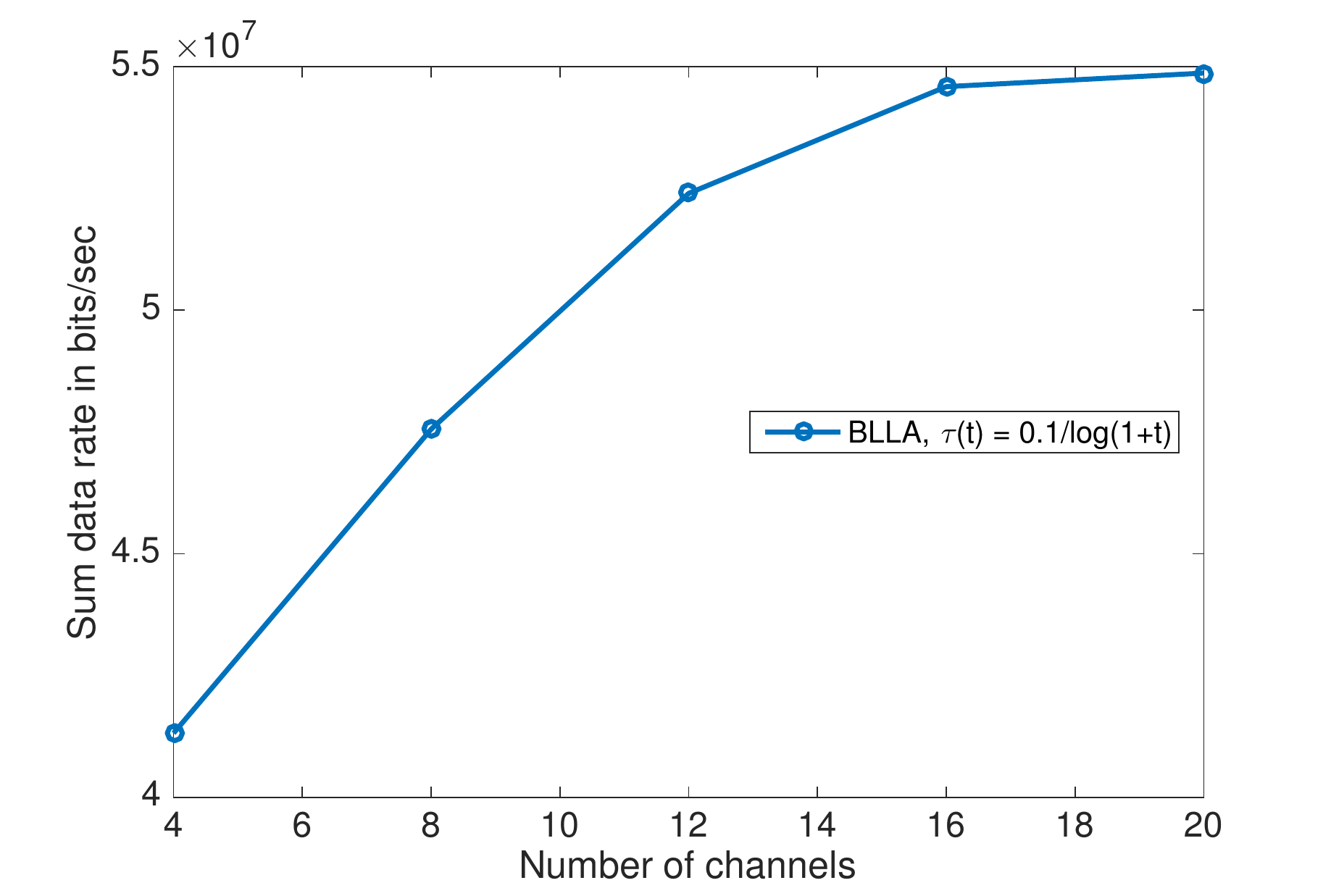}
  \caption{Effect of the number of channels on the sum data rate. }
  \label{fig:channnel_vs_throughput}
\end{figure}

We now study the effect of the number of channels in~\fgr{fig:channnel_vs_throughput}.
This plot shows the average sum data rate obtained from BLLA  at the end of 500  iterations averaged over 1000 realizations.
We see that the sum data rate increases as the number of channels increases. This is intuitive because the optimal channel assignment has lower interference per channel. As evident from the figure, BLLA correctly tracks this phenomenon.
\begin{figure}[t]
  \centering
\includegraphics[width=0.9\linewidth]{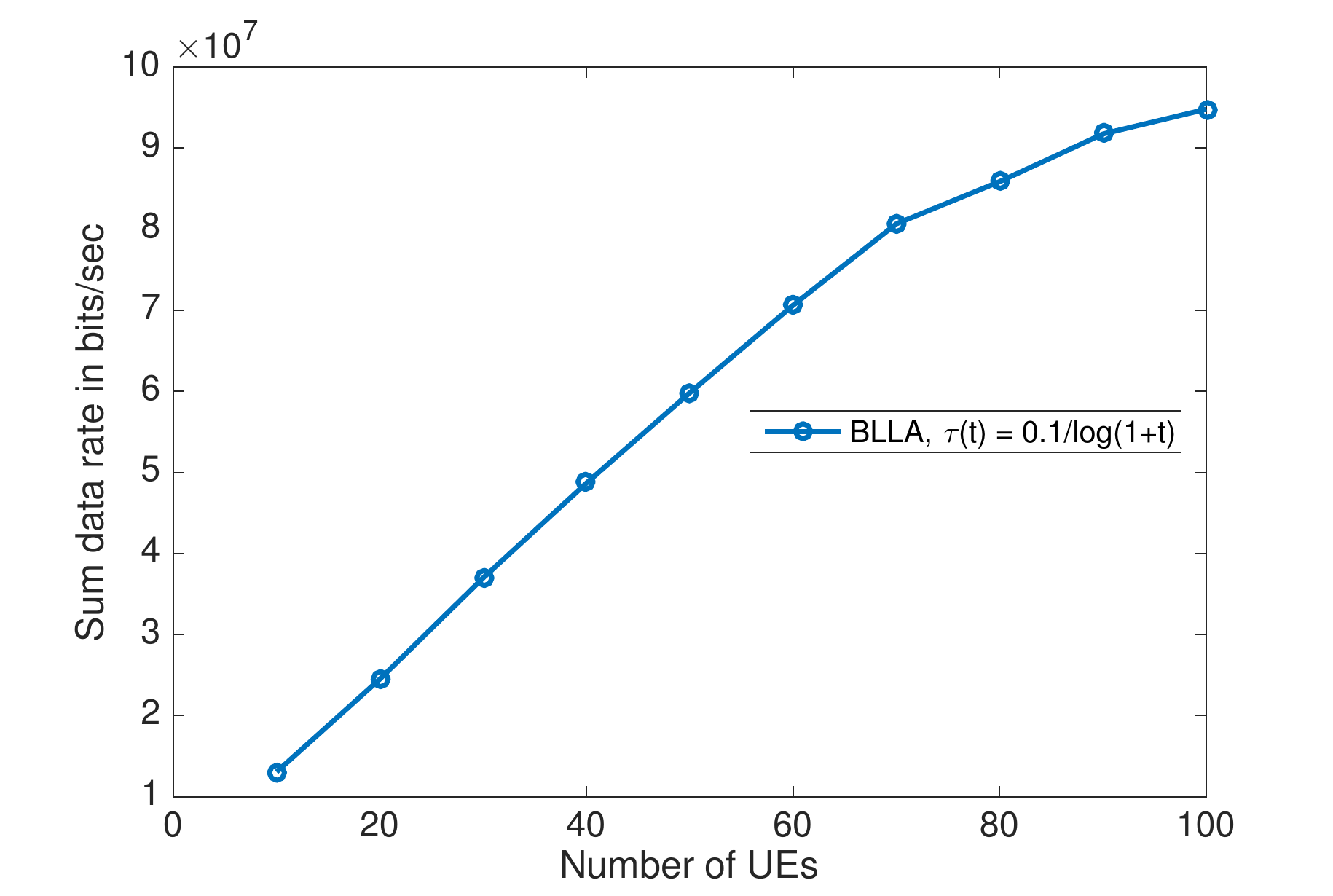}
  \caption{Effect of the number of UEs on the sum data rate.}
  \label{fig:UEs_vs_throughput}
\end{figure}

We also study the performance of BLLA by varying the number of UEs in~\fgr{fig:UEs_vs_throughput} for 10 orthogonal channels and 10 UECs. 
As before, the sum data rate is obtained from BLLA  at the end of 500 iterations averaged over 1000 realizations. As the number of UEs increases (up to approximately 60 in the figure), the sum data rate first linearly increases because of the increasing traffic. A linear growth is observed as long as interference is controlled. Sixty is much larger than the number of available channels, which means that BLLA manages to assign frequencies in such a way that UEs do not interfere too much. After 60 UEs, the increase is reduced because interference significantly affects the sum data rate.
BLLA exactly tracks this behaviour as evident from~\fgr{fig:UEs_vs_throughput}.

\begin{figure}[t]
  \centering
\includegraphics[width=0.9\linewidth]{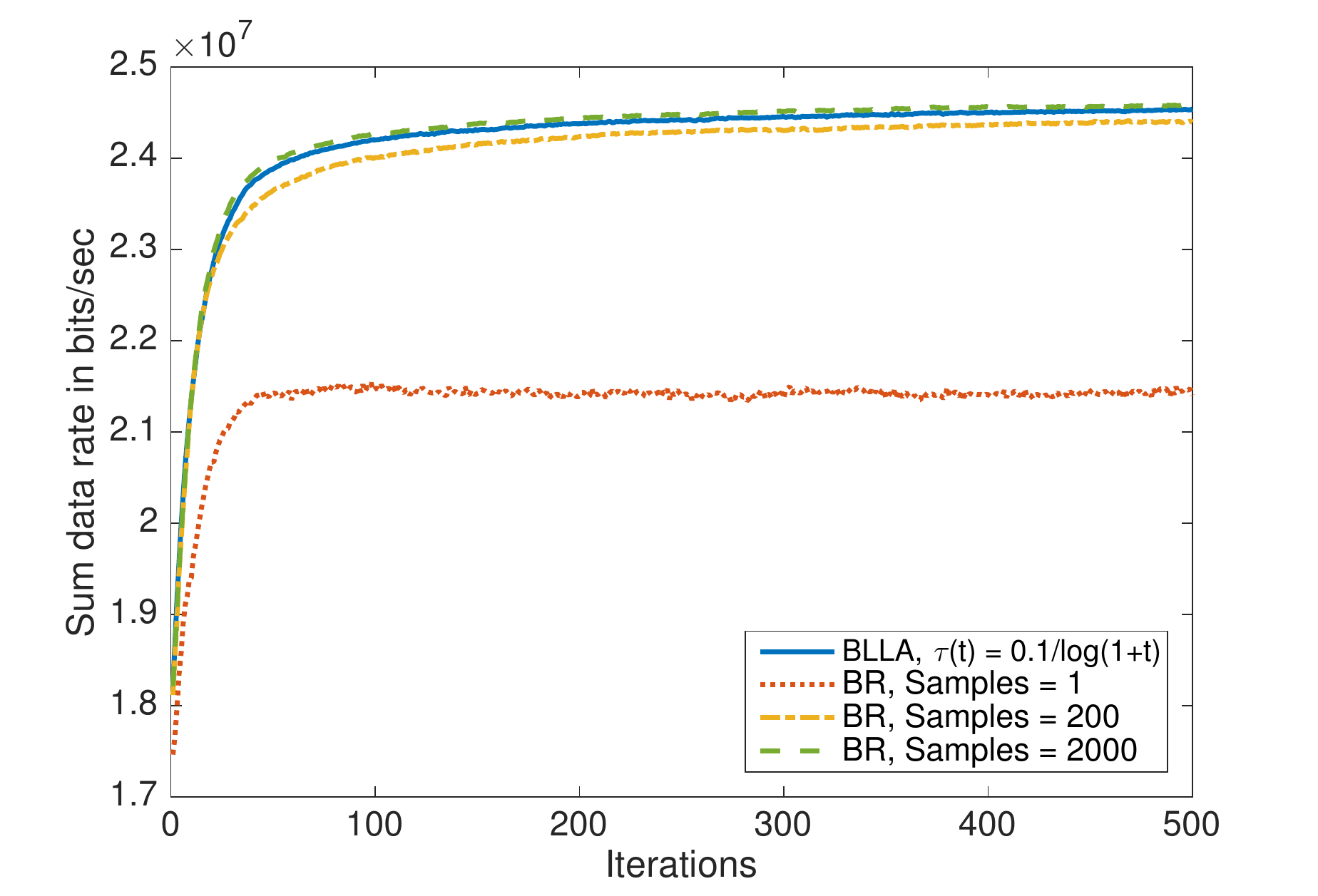}
  \caption{Comparison of BLLA and better response algorithms.}
  \label{fig:compare_blla_br}
\end{figure}
We now compare in~\fgr{fig:compare_blla_br} the performance of BLLA and better response (BR) algorithm, 
which accepts the trial action only if its utility is better than the current action.
Best response algorithm, which is same as better response with two actions, was applied to CAP with the objective of minimizing the total interference in Wireless Sensor
Networks (WSN) in~\cite{chen2011game}.
The parameter of BLLA is $\tau(t) = 0.1/\log\brac{1+t}$.
Each curve in the figure is obtained by averaging over 1000 realizations of the algorithms.
BR performs the worst when noise is not taken into account, which corresponds to one estimation sample case.
When the number of samples is increased to $200$, BR improves. However, BLLA is better than BR.
For 2000 samples, BR performance is the same as BLLA. 
It shows that the number of samples for BR has to be tuned carefully to obtain the desired performance.
On the contrary, BLLA performs better with the fixed number of samples without any tuning.
Note that theoretically BR needs an infinite number of samples. There is no theoretical guarantee for its convergence for a finite number of samples.
On the contrary, as we have proved, BLLA has a theoretical guarantee of convergence with a finite number of samples.

%
%

\section{Conclusions}\label{sec:conclusions}
A novel optimal solution for CAP in D2D wireless networks that takes into account throughput estimation noise is presented.
To capture this noise, a noisy potential game framework is introduced.
A distributed Binary Log-linear Learning Algorithm that achieves the optimal channel assignments is proposed.
 A sufficient number of estimation samples that guarantees the convergence in both cases of bounded noise and unbounded noise are given that are validated using simulations.
BLLA achieves the optimal sum data rate. 
The sum data rate increases with the number of channels and with the number of UEs.
BLLA performs better than better response algorithm.
\bibliographystyle{ieeetr}
\bibliography{shabb_cre_model_MC}

\appendix

\subsection{Proof of Convergence of BLLA with Fixed~$\tau$}\label{app:conv_blla}

If the utilities of the CAP game are deterministic and without noise then the CAP game becomes an exact potential game. For an exact potential game, the stochastically stable states of BLLA are the maximisers of potential~\cite{marden2010revisiting,leslie2011equilibrium,Shabb_EPG_2016}. 
We prove the same even for a noisy potential CAP game $\mathcal{\hat{G}}^N$ if the number of samples is chosen carefully.
Our proof approach is to show that for a particular number of samples the resistance BLLA with estimated utilities is same as that of with the deterministic utilities\footnote{In all the proofs, the considered utilities are normalized by the maximum potential $\phi_{\max}$.}.
This kind of proof idea based on resistance is similar to that of~\cite{Young93,leslie2011equilibrium}.

We first recall the computation of the resistance of BLLA in a deterministic potential game, as in~\cite{marden2010revisiting}.
Let consider the CAP game $\mathcal{G} \coloneqq \cbrac{\mathcal{D},\cbrac{\strategy_i}_{i\belongs \mathcal{D}},\cbrac{\utility_i}_{i\belongs \mathcal{D}}}$, with expected utilities $\utility _i=\expect{\hat{\utility}^N_i}$. It is an exact potential game.
BLLA induces a regular Markov process over the action space $X$ of $\mathcal{G} $~\cite{marden2010revisiting,leslie2011equilibrium,Shabb_EPG_2016}. 
Let denote $P^\tau$ as the transition matrix of the regular Markov process.

\begin{defn}[Resistance of transition~\cite{marden2010revisiting}]\label{def:resistance_classic}
Let $a= \brac{a_i,a_{-i}}$ and $b= \brac{a'_i,a_{-i}}$ be action profiles such that only player $i$ changes its action.
Let $P_{\text{ab}}^\tau$ be a strictly positive probability transition function with the parameter $\tau$.
A non negative number $R_{\text{ab}}$ is the resistance of transition $a\to b$  if
\begin{equation}\label{eq:resistance_def_literature}
 0< \lim_{\tau \to 0^{+}} \frac{P_{\text{ab}}^\tau}{e^{-\frac{R_{\text{ab}}}{\tau}}}<\infty.
\end{equation}
\end{defn}


To develop easy rules to compute the resistance of a function we give a generalised definition of resistance in the following.
Let $``o"$ and $``\omega"$ denote little "o" order and little omega order, respectively.
We call function $g(\tau)$ a sub-exponential function that satisfies $g\in o\brac{e^{k/\tau}}$ and $g \in \omega\brac{e^{-k/\tau}}$ for any $k>0$.
\begin{defn}[Resistance of positive function]\label{def:resistance_pos_func}
Let $f(\tau)$ be a strictly positive function. If there is a sub-exponential function $g(\tau)$ and a number $R$ such that~\eqn{eq:res_pos_func} holds, then $R$ is unique (see Lemma~\ref{lem:Res_unique}) and is called the resistance of $f$, denoted by $Res(f)$.
\begin{equation}\label{eq:res_pos_func}
\lim_{\tau \to 0} \frac{f(\tau)}{g(\tau)e^{-\frac{\text{R}}{\tau}}} = 1.
\end{equation}
\end{defn}
\begin{remark}
Remark that Definition~\ref{def:resistance_pos_func} includes Definition~\ref{def:resistance_classic}, in which $g(\tau) = \kappa,  0<\kappa <\infty$.
\end{remark}

\begin{remark}
Note that~\eqn{eq:res_pos_func} is equivalent to
\begin{equation}\label{eq:f_equiv_Res}
f(\tau)= g(\tau) e^{-\frac{\text{Res}(f)}{\tau}} +h(\tau),
\end{equation}
where $h(\tau) \in o\brac{g(\tau)e^{-\frac{\text{Res}(f)}{\tau}}}$.
\end{remark}
\begin{lemma}\label{lem:order_g_1_g2}
Consider any two sub-exponential functions $g_1(\tau)$ and $g_2(\tau)$. Consider two real numbers $R_1$ and $R_2$. 
If $R_1<R_2$ then 
\begin{equation}
g_2(\tau)e^{-R_2/\tau} \in o \brac{g_1(\tau)e^{-R_1/\tau}}.
\end{equation}
\end{lemma}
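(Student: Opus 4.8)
The plan is to prove the little-$o$ claim directly from the definition, i.e. to show that
\[
\lim_{\tau\to 0^{+}}\ \frac{g_2(\tau)\,e^{-R_2/\tau}}{g_1(\tau)\,e^{-R_1/\tau}} \;=\; 0 .
\]
Setting $\delta \coloneqq R_2-R_1>0$, this ratio is $\dfrac{g_2(\tau)}{g_1(\tau)}\,e^{-\delta/\tau}$, and the one idea of the proof is to split the exponential gap symmetrically, $e^{-\delta/\tau}=e^{-\delta/(2\tau)}\cdot e^{-\delta/(2\tau)}$, so that
\[
\frac{g_2(\tau)}{g_1(\tau)}\,e^{-\delta/\tau}
\;=\;
\Bigl(g_2(\tau)\,e^{-\delta/(2\tau)}\Bigr)\cdot\Bigl(\frac{e^{-\delta/(2\tau)}}{g_1(\tau)}\Bigr).
\]
This decouples the ``at most sub-exponential growth'' of $g_2$ from the ``at least sub-exponential decay'' of $g_1$.

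First I would bound the left factor: since $g_2$ is sub-exponential, $g_2\in o\!\left(e^{k/\tau}\right)$ for every $k>0$; taking $k=\delta/2$ gives $g_2(\tau)\,e^{-\delta/(2\tau)}\to 0$ as $\tau\to 0^{+}$. Then I would bound the right factor: since $g_1$ is sub-exponential, $g_1\in\omega\!\left(e^{-k/\tau}\right)$ for every $k>0$; taking $k=\delta/2$ gives $|g_1(\tau)|/e^{-\delta/(2\tau)}\to\infty$, hence $e^{-\delta/(2\tau)}/g_1(\tau)\to 0$. A product of two quantities each tending to $0$ tends to $0$, which proves $g_2(\tau)e^{-R_2/\tau}\in o\!\left(g_1(\tau)e^{-R_1/\tau}\right)$. (To sidestep any sign concern one works with $|g_1|,|g_2|$ throughout; the sub-exponential conditions are orders of magnitude and are insensitive to this.)

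There is essentially no obstacle here — the only thing to get right is the symmetric split of the exponent. An even more elementary phrasing, if it meshes better with the surrounding resistance calculus and with Lemma~\ref{lem:Res_unique}, is to observe that for all sufficiently small $\tau$ one has $g_1(\tau)\ge e^{-\delta/(2\tau)}$ and $g_2(\tau)\le e^{\delta/(4\tau)}$, whence the ratio is eventually at most $e^{\delta/(4\tau)}\,e^{\delta/(2\tau)}\,e^{-\delta/\tau}=e^{-\delta/(4\tau)}\to 0$; I would keep whichever version reads more cleanly in context.
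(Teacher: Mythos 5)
Your proof is correct and follows essentially the same route as the paper: both arguments split the exponential gap $e^{-(R_2-R_1)/\tau}$ at an intermediate exponent (the paper uses a generic $k$ with $R_1<k<R_2$, you take the midpoint $k=(R_1+R_2)/2$) and then invoke the two defining properties $g_2\in o\brac{e^{k/\tau}}$ and $g_1\in\omega\brac{e^{-k/\tau}}$ to send each factor to zero. No substantive difference.
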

\begin{IEEEproof}
Let $k$ be a real number. Then
\begin{equation}
\lim_{\tau\to 0}\frac{g_2(\tau)e^{-R_2/\tau}}{g_1(\tau)e^{-R_1/\tau}} =
 \lim_{\tau\to 0}\frac{g_2(\tau)}{e^{-\brac{R_2-k}/\tau}} \sbrac{\frac{g_1(\tau)}{e^{-\brac{R_1-k}/\tau}}}^{-1}.
\end{equation}
The above limit goes to zero when we choose $R_1<k<R_2$.
\end{IEEEproof}
\begin{lemma}\label{lem:Res_unique}
If $\text{Res}(f)$ exists then it is unique.
\end{lemma}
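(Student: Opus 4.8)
The plan is to argue by contradiction, reducing everything to Lemma~\ref{lem:order_g_1_g2}. Suppose that two numbers $R_1$ and $R_2$ both qualify as resistances of the same strictly positive function $f$, so that there exist sub-exponential functions $g_1(\tau)$ and $g_2(\tau)$ with
\begin{equation}
\lim_{\tau \to 0} \frac{f(\tau)}{g_1(\tau)e^{-R_1/\tau}} = 1
\qquad\text{and}\qquad
\lim_{\tau \to 0} \frac{f(\tau)}{g_2(\tau)e^{-R_2/\tau}} = 1 .
\end{equation}
Without loss of generality I would take $R_1 \le R_2$, and assume for contradiction that $R_1 < R_2$.

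Next I would record that all denominators involved are nonzero for $\tau$ small enough, so the ratios below make sense: $f(\tau) > 0$ by assumption, and since each of the two limits above equals $1 > 0$, the quantities $g_j(\tau)e^{-R_j/\tau}$ must be strictly positive in a punctured neighbourhood of $\tau = 0$. Dividing the first limit by the second then yields
\begin{equation}
\lim_{\tau \to 0} \frac{g_2(\tau)e^{-R_2/\tau}}{g_1(\tau)e^{-R_1/\tau}} = 1 .
\end{equation}
On the other hand, $R_1 < R_2$ together with Lemma~\ref{lem:order_g_1_g2} gives $g_2(\tau)e^{-R_2/\tau} \in o\!\brac{g_1(\tau)e^{-R_1/\tau}}$, i.e.\ the very same limit equals $0$. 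This is the desired contradiction, so $R_1 = R_2$ and the resistance is unique.

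The argument is essentially immediate once Lemma~\ref{lem:order_g_1_g2} is in hand; there is no real obstacle. The only point that warrants a line of care is the bookkeeping that $f$ and the $g_j(\tau)e^{-R_j/\tau}$ do not vanish near $\tau = 0$, which legitimises the division step and follows directly from the strict positivity of $f$ in Definition~\ref{def:resistance_pos_func} and the fact that both defining limits are equal to $1$.
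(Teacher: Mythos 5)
Your proof is correct and follows essentially the same route as the paper's: both argue by contradiction, assuming $R_1 < R_2$ and invoking Lemma~\ref{lem:order_g_1_g2} to force a quantity that must tend to $1$ to also tend to $0$. The only difference is cosmetic --- you divide the two defining limits directly, whereas the paper first rewrites each as $f(\tau)=g_j(\tau)e^{-R_j/\tau}+h_j(\tau)$ and manipulates the remainder terms; your version is, if anything, slightly cleaner.
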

\begin{IEEEproof}
Assume that function $f$ have two different resistances $R_1$ and $R_2$.
Then, from~\eqn{eq:f_equiv_Res} we have
\begin{equation}
f(\tau) = g_1(\tau) e^{-\frac{R_1}{\tau}} +h_1(\tau)= g_2(\tau) e^{-\frac{R_2}{\tau}} +h_2(\tau),
\end{equation}
where $h_1(\tau) \in o\brac{g_1(\tau)e^{-\frac{R_1}{\tau}}}$ and $h_2(\tau) \in o\brac{g_2(\tau)e^{-\frac{R_2}{\tau}}}$.
Let $R_1<R_2$. Using Lemma~\ref{lem:order_g_1_g2}, we have $h_2 \in o\brac{g_1(\tau)e^{-\frac{R_1}{\tau}}}$.
Rearranging the term in above equation, we have
\begin{equation}
1 +\frac{h_1(\tau)}{g_1(\tau) e^{-\frac{R_1}{\tau}}}= \frac{g_2(\tau)e^{-\frac{R_2}{\tau}}}{g_1(\tau) e^{-\frac{R_1}{\tau}}} +\frac{h_2(\tau)}{g_1(\tau) e^{-\frac{R_1}{\tau}}}.
\end{equation}
Using Lemma~\ref{lem:order_g_1_g2}, we arrive at contradiction that $1=0$.

\end{IEEEproof}


The following Lemma gives useful rules for computing $\text{Res}(f)$.
\begin{lemma}\label{lem:comp_func}
Let $f_1$ and $f_2$ be strictly positive real valued functions. 
Let $\kappa$ be a positive constant.
 If $\text{Res}(f_1)$ and $\text{Res}(f_2)$ exist then
 \begin{enumerate}
\item \label{eq:Res_const}$f(\tau)$ is sub-exponential if and only if $\text{Res}(f)=0$. In particular $\text{Res}(\kappa)  = 0$,
\item  \label{eq:Res_exp}$ \text{Res}(e^{-\kappa /\tau})  = \kappa$,
\item \label{eq:Res_sum_f_g}$\text{Res}(f_1 + f_2)  = \min\cbrac{\text{Res}(f_1),\text{Res}(f_2)}$,
\item \label{eq:Res_sum_f_minus_g}$\text{Res}(f_1 - f_2)  = \text{Res}(f_1), \text{if } \text{Res}(f_1)<\text{Res}(f_2)$,
\item \label{eq:Res_prod_fg}$\text{Res}(f_1f_2) = \text{Res}(f_1)+\text{Res}(f_2)$,
\item  \label{eq:Res_f_inv}$ \text{Res}(\frac{1}{f})  = - \text{Res}(f), \text{if Res}(f)\neq 0$.
\item  \label{eq:Res_mono} If $f_1(\tau) \leq f_2(\tau)$, $\text{Res}(f_1)$ and $\text{Res}(f_2)$ exist then $\text{Res}(f_2)\leq \text{Res}(f_1)$.
\item  \label{eq:Res_mono_existence} Let $f_1(\tau) \leq f(\tau) \leq f_2(\tau)$. 
If $\text{Res}(f_1)=\text{Res}(f_2)$  then $\text{Res}(f)$ exists and $\text{Res}(f)= \text{Res}(f_1)$.
\end{enumerate}
 \end{lemma}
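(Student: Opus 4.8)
The plan is to derive all eight rules from the equivalent form~\eqn{eq:f_equiv_Res} of the definition: whenever $\text{Res}(f)=R$ we may write $f(\tau)=g(\tau)e^{-R/\tau}+h(\tau)$ for some sub-exponential $g$ and some $h\in o(g(\tau)e^{-R/\tau})$. The only ingredients needed are Lemma~\ref{lem:order_g_1_g2}, which says a term carrying the strictly larger exponent is of smaller order than one carrying the smaller exponent, together with a one-line preliminary observation: the class of sub-exponential functions is closed under sums, products, and reciprocals. These closure statements are immediate from the definition $g\in o(e^{k/\tau})$ and $g\in\omega(e^{-k/\tau})$ for every $k>0$ — for products split a given exponent $k$ as $k/2+k/2$; for reciprocals use that $g\in o(e^{k/\tau})$ is equivalent to $1/g\in\omega(e^{-k/\tau})$ and conversely; for sums use positivity of the summands.

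With these in hand the algebraic rules are short expansions. For~\ref{eq:Res_const}: if $\text{Res}(f)=0$ then $f=g+h=g(1+o(1))$, so $f$ inherits sub-exponentiality from $g$; conversely a sub-exponential $f$ has $\text{Res}(f)=0$ by taking $g=f$, $R=0$; in particular a constant $\kappa$ is sub-exponential, whence $\text{Res}(\kappa)=0$. Rule~\ref{eq:Res_exp} is immediate with $g\equiv1$. For~\ref{eq:Res_sum_f_g} and~\ref{eq:Res_sum_f_minus_g}, writing $f_i=g_ie^{-R_i/\tau}+h_i$: if $R_1<R_2$ then $g_2e^{-R_2/\tau}$ and $h_2$ are both $o(g_1e^{-R_1/\tau})$ by Lemma~\ref{lem:order_g_1_g2}, so $f_1\pm f_2=g_1e^{-R_1/\tau}+o(g_1e^{-R_1/\tau})$, giving resistance $R_1=\min\{R_1,R_2\}$; when $R_1=R_2=R$ one gets $f_1+f_2=(g_1+g_2)e^{-R/\tau}+(h_1+h_2)$, with $g_1+g_2$ sub-exponential and the remainder of smaller order, so the resistance is $R$ again. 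Rule~\ref{eq:Res_prod_fg} is the analogous expansion of the product, whose leading term is $g_1g_2e^{-(R_1+R_2)/\tau}$ (and $g_1g_2$ is sub-exponential) while the three cross/remainder terms are each $o$ of it. Rule~\ref{eq:Res_f_inv} follows by writing $1/f=(1/g)e^{R/\tau}(1+h/(g e^{-R/\tau}))^{-1}$, using that $1/g$ is sub-exponential and the last factor tends to $1$.

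The two monotonicity rules I would prove by comparison. For~\ref{eq:Res_mono}: if $f_1\le f_2$ but $\text{Res}(f_2)>\text{Res}(f_1)$, then $f_2/f_1$ is asymptotic to $(g_2e^{-R_2/\tau})/(g_1e^{-R_1/\tau})$, which tends to $0$ by Lemma~\ref{lem:order_g_1_g2}, contradicting $f_2/f_1\ge1$; hence $\text{Res}(f_2)\le\text{Res}(f_1)$. For the squeeze~\ref{eq:Res_mono_existence}, with $\text{Res}(f_1)=\text{Res}(f_2)=R$ and $f_1\le f\le f_2$, set $g:=f(\tau)e^{R/\tau}$; then $g_1(1+o(1))\le g\le g_2(1+o(1))$, so $g$ is trapped between constant multiples of two sub-exponential functions and is itself sub-exponential, while $f/(g e^{-R/\tau})\equiv1$, so $\text{Res}(f)$ exists and equals $R$.

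I do not expect a genuine obstacle: the lemma is bookkeeping once~\eqn{eq:f_equiv_Res} and Lemma~\ref{lem:order_g_1_g2} are in place. The point requiring the most care is making the informal "leading term plus lower-order terms" precise when the remainders $h_i$ may change sign; I would handle this by passing to absolute values, e.g. $|h_1+h_2|/((g_1+g_2)e^{-R/\tau})\le|h_1|/(g_1e^{-R/\tau})+|h_2|/(g_2e^{-R/\tau})\to0$. One should also observe that a difference like $f_1-f_2$ in~\ref{eq:Res_sum_f_minus_g} is strictly positive for $\tau$ small enough — the $R_1$-term strictly dominates — so that Definition~\ref{def:resistance_pos_func} applies in the regime $\tau\to0$ that is relevant.
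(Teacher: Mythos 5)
Your proposal is correct and follows essentially the same route as the paper: the decomposition $f(\tau)=g(\tau)e^{-\text{Res}(f)/\tau}+h(\tau)$ from~\eqn{eq:f_equiv_Res} combined with Lemma~\ref{lem:order_g_1_g2} to isolate the dominant term, with the same case split $R_1<R_2$ versus $R_1=R_2$ for the sum rule and the same contradiction argument for Rule~\ref{eq:Res_mono}. Your treatment of Rule~\ref{eq:Res_mono_existence} via $g:=f(\tau)e^{R/\tau}$ squeezed between sub-exponential functions is a slightly more direct phrasing of the paper's argument through $f/f_1$, and your explicit attention to the signs of the remainders $h_i$ and to the positivity of $f_1-f_2$ for small $\tau$ tightens points the paper leaves implicit.
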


\begin{remark}
In Rule~\ref{eq:Res_sum_f_minus_g}, if $\text{Res}(f_1)=\text{Res}(f_2)$ then we cannot compute $\text{Res}(f_1 - f_2)$ because the difference of sub-exponential functions may not be a sub-exponential function.
\end{remark}

\begin{remark}
In Rule~\ref{eq:Res_mono_existence}, in general if $f_1(\tau) \leq f(\tau) \leq f_2(\tau)$ and $\text{Res}(f_1)\neq \text{Res}(f_2)$  then $\text{Res}(f)$ may not exists.
\end{remark}

\begin{IEEEproof}
\emph{Proof of rule~\ref{eq:Res_const}:}
Let $f(\tau)$ be a sub-exponential function. Choosing  $g(\tau)=f(\tau)$ from~\eqn{eq:res_pos_func} we have
\begin{equation}
\lim_{\tau \to 0} e^{-\frac{\text{Res}(f)}{\tau}}=1.
\end{equation}
Therefore, we have  $\text{Res}(f)=0$.

Assume $\text{Res}(f)=0$. From~\eqn{eq:f_equiv_Res}, we have $f(\tau)= g(\tau) +h(\tau)$, which is a sub-exponential function.

Let $f(\tau) = \kappa$ and  $g(\tau) = \kappa$ then $g(\tau) \in o\brac{e^{\frac{ \kappa}{\tau}}}$ and  $g(\tau) \in \omega\brac{e^{-\frac{ \kappa}{\tau}}}$, $ \kappa>0$. 
Substituting these in~\eqn{eq:res_pos_func} we have $\text{Res}(\kappa)=0$.

\emph{Proof of rule~\ref{eq:Res_exp}:}
Substituting  $f(\tau) = e^{-\kappa/\tau}$ and $g(\tau) = 1$  in~\eqn{eq:res_pos_func} we get $\text{Res}(e^{-\kappa/\tau})=\kappa$.

\emph{Proof of rule~\ref{eq:Res_sum_f_g}:}
Consider that $\text{Res}(f_1)$ and $\text{Res}(f_2)$ be the resistances of functions $f_1$ and $f_2$, respectively. We have
\begin{align}
f_1(\tau)&= g_1(\tau) e^{-\frac{\text{Res}(f_1)}{\tau}} +h_1(\tau),\\
f_2(\tau)&= g_2(\tau) e^{-\frac{\text{Res}(f_2)}{\tau}} +h_2(\tau),
\end{align}
where $h_1(\tau) \in o\brac{g_1(\tau) e^{-\frac{\text{Res}(f_1)}{\tau}}}$, $h_2(\tau) \in o\brac{g_2(\tau) e^{-\frac{\text{Res}(f_2)}{\tau}}}$.

The sum of two functions can be written as
\begin{multline}
f_1(\tau)+f_2(\tau)  = g_1(\tau) e^{-\frac{\text{Res}(f_1)}{\tau}} \left(1 +\frac{h_1(\tau)}{g_1(\tau) e^{-\frac{\text{Res}(f_1)}{\tau}} }
\right.  \\+ \left. \frac{g_2(\tau) e^{-\frac{\text{Res}(f_2)}{\tau}}}{g_1(\tau) e^{-\frac{\text{Res}(f_1)}{\tau}} } 
+ \frac{h_2(\tau)}{g_1(\tau) e^{-\frac{\text{Res}(f_1)}{\tau}} }\right),
\end{multline}
Consider the case when $\text{Res}(f_1)<\text{Res}(f_2)$. 
Using Lemma~\ref{lem:order_g_1_g2} we have $h_2 \in o\brac{g_1(\tau) e^{-\frac{\text{Res}(f_1)}{\tau}}}$. Therefore, 
\begin{equation}
f_1(\tau)+f_2(\tau) 
= g_1(\tau) e^{-\frac{\text{Res}(f_1)}{\tau}}  + h_3(\tau),
\end{equation} 
where $h_3(\tau) \in o\brac{g_1(\tau) e^{-\frac{\text{Res}(f_1)}{\tau}} }$.
According to~\eqn{eq:f_equiv_Res}, we have $\text{Res}(f_1 + f_2)  =\text{Res}(f_1)= \min\cbrac{\text{Res}(f_1),\text{Res}(f_2)}$. 

The case of $\text{Res}(f_1)=\text{Res}(f_2)$ leads to the same result as shown below.
\begin{equation}
f_1(\tau)+f_2(\tau) 
= e^{-\frac{\text{Res}(f_1)}{\tau}}\sbrac{g_1(\tau) +g_2(\tau) }  + h_1(\tau)+h_2(\tau).
\end{equation} 
Note that sum of sub-exponential functions $g_1(\tau) +g_2(\tau)$ is a sub-exponential function.
 Observe that $h_1(\tau)+h_2(\tau) \in o\brac{\sbrac{g_1(\tau) +g_2(\tau) }e^{-\frac{\text{Res}(f_1)}{\tau}}}$. 
As in the previous case, according to~\eqn{eq:f_equiv_Res} we have  $\text{Res}(f_1 + f_2)  =\text{Res}(f_1)$

\emph{Proof of rule~\ref{eq:Res_sum_f_minus_g}:}
Also, it can be shown similarly to the proof of rule~\ref{eq:Res_sum_f_g} that if $\text{Res}(f_1)<\text{Res}(f_2)$ then
\begin{equation}
\text{Res}(f_1 - f_2)  =\text{Res}(f_1).
\end{equation}

\emph{Proof of rule~\ref{eq:Res_prod_fg}:}
\begin{multline}
\lim_{\tau \to 0} \frac{f_1(\tau)}{g_1(\tau)e^{-\frac{\text{Res}(f_1)}{\tau}}}
\lim_{\tau \to 0} \frac{f_2(\tau)}{g_2(\tau)e^{-\frac{\text{Res}(f_2)}{\tau}}}\\
=  \lim_{\tau \to 0} \frac{f_1(\tau)f_2(\tau)}{g_1(\tau)g_2(\tau)e^{-\frac{\text{Res}(f_1)+\text{Res}(f_1)}{\tau}}}=1.
\end{multline} 
Therefore,  $\text{Res}(f_1f_2)  =\text{Res}(f_1)+\text{Res}(f_2)$.

\emph{Proof of rule~\ref{eq:Res_f_inv}:}
Since $\text{Res}(f)$ exists we have
\begin{equation}
\lim_{\tau \to 0} \frac{f(\tau)}{g(\tau) e^{-\frac{\text{Res}(f)}{\tau}}}= 1.
\end{equation}
Inversing both sides of the above equation, we get
\begin{equation}
\lim_{\tau \to 0} \frac{\frac{1}{f(\tau)}}{\frac{1}{g(\tau)}e^{-\frac{-\text{Res}(f)}{\tau}}} = 1.
\end{equation}
Therefore, we have $\text{Res}(\frac{1}{f})=-\text{Res}(f)$.

\emph{Proof of rule~\ref{eq:Res_mono}:}
Assume that $\text{Res}(f_1)< \text{Res}(f_2)$.
Using Lemma~\ref{lem:order_g_1_g2}, we have $g_2(\tau)e^{-\text{Res}(f_2)/\tau} \in o \brac{g_1(\tau)e^{-\text{Res}(f_1)/\tau}}$ and
$h_2 \in o\brac{g_1(\tau)e^{-\text{Res}(f_1)/\tau}}$.
\begin{align}
f_1& \leq f_2,\\
g_1(\tau)e^{-\text{Res}(f_1)/\tau}+h_1(\tau) &\leq g_2(\tau)e^{-\text{Res}(f_2)/\tau}+h_2(\tau) ,\\
1+\frac{h_1(\tau) }{g_1(\tau)e^{-\text{Res}(f_1)/\tau}}&\leq \frac{g_2(\tau)e^{-\text{Res}(f_2)/\tau}+h_2(\tau)}{g_1(\tau)e^{-\text{Res}(f_1)/\tau}}.
\end{align}
As $\tau$ tends to zero we arrive at a contradiction that $1\leq 0$. Therefore, $\text{Res}(f_1)\geq \text{Res}(f_2)$.

\emph{Proof of rule~\ref{eq:Res_mono_existence}:}
We have  $1 \leq \frac{f(\tau)}{f_1(\tau)} \leq \frac{f_2(\tau)}{f_1(\tau)}$ and $\text{Res}\brac{\frac{f_2(\tau)}{f_1(\tau)}} =  \text{Res}(f_1) - \text{Res}(f_2)=0$. By Rule~\ref{eq:Res_const} $ \frac{f_2(\tau)}{f_1(\tau)}$ is sub-exponential. This implies that
$ \frac{f(\tau)}{f_1(\tau)}$ is also sub-exponential
Therefore, there exists $g_{01}(\tau)$ such that

\begin{align}
1 &= \lim_{\tau \to 0} \frac{\frac{f(\tau)}{f_1(\tau)}}{g_{01}(\tau)},\\
&=\lim_{\tau \to 0} \frac{f(\tau)}{g_{01}(\tau)g_1(\tau)e^{-\frac{\text{Res}(f_1)}{\tau}}}
\lim_{\tau \to 0} \frac{g_1(\tau)e^{-\frac{\text{Res}(f_1)}{\tau}}}{f_1(\tau)} ,\\
&=\lim_{\tau \to 0} \frac{f(\tau)}{g_{01}(\tau)g_1(\tau)e^{-\frac{\text{Res}(f_1)}{\tau}}},
\end{align}
where the product $g_{01}(\tau)g_1(\tau)$ is also a sub-exponential function.
Therefore,  $\text{Res}(f)$ exists and $\text{Res}(f)=\text{Res}(f_1)=\text{Res}(f_2)$.

\end{IEEEproof}

As an illustration of application of the above rules we calculate the resistance of BLLA with deterministic utilities using the above rules.
Let $m_i(t)$ denote the probability of choosing player $i$ to revise its action.
In case of deterministic utilities, the transition probability $P^\tau_{ab}$ of BLLA is 
\begin{equation}
P^\tau_{ab} = m_i(t)\frac{e^{\frac{1}{\tau}U_i(b)}}{e^{\frac{1}{\tau}U_i(a)}+e^{\frac{1}{\tau}U_i(b)}}.
\end{equation}
Let $\Delta_i  = U_i(a)-U_i(b)$.

Using Lemma~\ref{lem:comp_func}, we have 
$\text{Res}(P^\tau_{ab} )$
\begin{align}
&=\text{Res}\brac{m_i(t)}+\text{Res}\brac{\frac{e^{\frac{1}{\tau}U_i(b)}}{e^{\frac{1}{\tau}U_i(a)}+e^{\frac{1}{\tau}U_i(b)}}},\\
&=\text{Res}\brac{e^{\frac{1}{\tau}U_i(b)}}-\min\cbrac{\text{Res}\brac{e^{\frac{1}{\tau}U_i(a)}},\text{Res}\brac{e^{\frac{1}{\tau}U_i(b)}}},\\
\label{eqn:Rab_bllla} &=
\Delta^{+}_i,
\end{align}
wherer $\Delta^{+}_i = \max\cbrac{0,\Delta_i}$.

In the following, we show that the resistance of BLLA for the noisy potential CAP game $\mathcal{\hat{G}}^N$ with estimated utilities $\hat{\utility}^N_i$ is same as in~\eqn{eqn:Rab_bllla}.
For this, we need the following lemma.
\begin{lemma}\label{lem:diff_pN_i_p_i}
Let denote 
$\Delta^N_i = \hat{\utility}^N_i(a)-\hat{\utility}^N_i(b),$
$\Delta_i  = U_i(a)-U_i(b),$
\begin{align}
\label{eqn:pN_i}p^N_i &= \expect{\brac{1+e^{\Delta^N_i/\tau}}^{-1}},\\
\label{eqn:p_i} p_i & = \brac{1+e^{\Delta_i/\tau}}^{-1},
\end{align}
and consider the event 
$A^\delta = \cbrac{\abs{\Delta^N_i-\Delta_i}< \delta}$.
Then
\begin{equation}
\label{eqn:diff_ki_tilde_ki_1}
\abs{p^N_i - p_i} \leq \delta\tau^{-1}p_i +2\prob{\bar{A}^\delta}.
\end{equation}

\end{lemma}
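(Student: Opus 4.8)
The plan is to express the difference as the expectation of an increment of the logistic map $f(x) \coloneqq \brac{1+e^{x/\tau}}^{-1}$, using that $p^N_i = \expect{f(\Delta^N_i)}$ while $p_i = f(\Delta_i)$ is deterministic, and then to split that expectation over the event $A^\delta$ and its complement:
\[ p^N_i - p_i = \expect{\brac{f(\Delta^N_i) - f(\Delta_i)}\indic{A^\delta}} + \expect{\brac{f(\Delta^N_i) - f(\Delta_i)}\indic{\bar{A}^\delta}} . \]
On the complement $\bar A^\delta$ I would use nothing more than $0 < f < 1$: since $\abs{\expect{f(\Delta^N_i)\indic{\bar{A}^\delta}}} \le \prob{\bar{A}^\delta}$ and $\abs{\expect{p_i\indic{\bar{A}^\delta}}} \le \prob{\bar{A}^\delta}$, the second term is bounded in absolute value by $2\prob{\bar{A}^\delta}$, which is exactly the last term of~\eqn{eqn:diff_ki_tilde_ki_1}.

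For the first term, on $A^\delta$ we have $\abs{\Delta^N_i - \Delta_i} < \delta$, so it suffices to control the oscillation of $f$ on the interval of radius $\delta$ around $\Delta_i$. Differentiating, $f'(x) = -\tau^{-1} f(x)\brac{1-f(x)}$, hence $\abs{f'(x)} \le \tau^{-1} f(x)$; by the mean value theorem, for a suitable $\zeta$ between $\Delta^N_i$ and $\Delta_i$,
\[ \abs{f(\Delta^N_i) - f(\Delta_i)} \le \delta\tau^{-1}\sup_{\abs{\zeta-\Delta_i}\le\delta} f(\zeta) = \delta\tau^{-1} f(\Delta_i-\delta) , \]
the last equality because $f$ is decreasing. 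Finally $f(\Delta_i-\delta) = \brac{1+e^{-\delta/\tau}e^{\Delta_i/\tau}}^{-1} \le e^{\delta/\tau}p_i$, so taking expectations (and $\prob{A^\delta}\le 1$) bounds the first term by $\delta\tau^{-1}e^{\delta/\tau}p_i$; in the operating regime $\delta \ll \tau$ the factor $e^{\delta/\tau}$ is $\approx 1$, which yields the term $\delta\tau^{-1}p_i$ of~\eqn{eqn:diff_ki_tilde_ki_1}. The same estimate can alternatively be obtained without the mean value theorem, from the algebraic identity $f(\Delta^N_i) - f(\Delta_i) = p_i\brac{e^{\Delta_i/\tau} - e^{\Delta^N_i/\tau}}\big/\brac{1+e^{\Delta^N_i/\tau}}$ together with $\abs{1-e^u}\le e^{\abs u}-1$ applied with $\abs u = \abs{\Delta_i-\Delta^N_i}/\tau < \delta/\tau$.

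\textbf{Main obstacle.} The only delicate point is getting the coefficient in front of $\delta\tau^{-1}$ to be $p_i$ itself rather than a crude constant: this hinges on the fact that $f$ evaluated anywhere within distance $\delta$ of $\Delta_i$ stays comparable to $p_i=f(\Delta_i)$, which is where monotonicity of the sigmoid and the smallness of $\delta/\tau$ enter (the literal inequality carries an $e^{\delta/\tau}$ factor that is treated as $1$). The split over $A^\delta$, the crude $0<f<1$ bound on $\bar A^\delta$, and the Lipschitz estimate are all routine.
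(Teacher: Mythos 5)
Your proof follows essentially the same route as the paper's: the split of $p^N_i-p_i$ over $A^\delta$ and $\bar{A}^\delta$, the crude bound $2\prob{\bar{A}^\delta}$ on the complement (using only $0<f<1$), and a derivative bound on the logistic map on $A^\delta$ — this is exactly the paper's argument written as restricted expectations rather than conditional probabilities. The one place you diverge is the point you yourself flag: the mean value theorem only gives $\abs{f(\Delta^N_i)-f(\Delta_i)}\leq \delta\tau^{-1}\sup_{\abs{\zeta-\Delta_i}\leq\delta}f(\zeta)\leq \delta\tau^{-1}e^{\delta/\tau}p_i$, whereas the stated bound has no $e^{\delta/\tau}$. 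Here you are actually more careful than the paper, whose proof asserts $\abs{\text{Pr}^N\brac{a\to b \given A^\delta}-p_i}\leq\delta\tau^{-1}p_i$ by evaluating the derivative $\tau^{-1}p_i(1-p_i)$ at $\Delta_i$ itself, silently ignoring that the intermediate point of the MVT can sit at $\Delta_i-\delta$ where $f$ is larger by up to that same factor $e^{\delta/\tau}$. Two caveats on your handling of it: first, your justification "in the operating regime $\delta\ll\tau$" does not match the paper's actual downstream choice $\delta=(1-\xi)\tau$, for which $e^{\delta/\tau}=e^{1-\xi}\approx e$ is not close to $1$; second, this discrepancy is nonetheless harmless for everything the lemma is used for, because a constant factor is sub-exponential in $1/\tau$ and so does not change the resistances computed from the resulting upper and lower bounds on $P^\tau_{ab}$ (though it does degrade the explicit lower bound $\xi p_i - \ldots$ as written in \eqn{eqn:tilde_ki_ineq2}). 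So: same approach, correct modulo a constant that the paper's own proof also mishandles; the clean fix is either to state the lemma with $\delta\tau^{-1}e^{\delta/\tau}p_i$ or to note that only the resistance of the bound matters.
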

\begin{IEEEproof} 
Notice that the probability of transition of BLLA from action $a$ to $b$ in noisy potential game $\mathcal{\hat{G}}^N$ is 
$p^N_i= \text{Pr}^N\brac{a\to b}$
 given in~\eqn{eqn:pN_i}
   and in deterministic potential game is 
   $p_i=\prob{a\to b}$
   given in~\eqn{eqn:p_i}.
Using the law of total probability, we can write
\begin{equation}
p^N_i= \text{ Pr}^N\brac{a\to b \Given A^\delta}\prob{A^\delta}
+ \text{Pr}^N\brac{a\to b \Given \bar{A}^\delta}\prob{\bar{A}^\delta}.
\end{equation}
and 
\begin{multline}
\label{eqn:diff_ki_tilde_ki}
\abs{p^N_i - p_i} \leq \abs{\text{ Pr}^N\brac{a\to b \Given A^\delta}-p_i}\prob{A^\delta}\\+\abs{\text{ Pr}^N\brac{a\to b \Given \bar{A}^\delta}-p_i}\prob{\bar{A}^\delta}.
\end{multline}
It can be shown that the absolute value of the derivative of $p_i$ with respect to $\Delta_i$ is $\tau^{-1}p_i\brac{1-p_i} \leq \tau^{-1}p_i$.
 Therefore, we have 
\begin{equation}\label{eqn:lipchiltz_cond}
 \abs{\text{ Pr}^N\brac{a\to b \Given A^\delta}-p_i} \leq \delta\tau^{-1}p_i.
\end{equation}
Also, we bound $\abs{\text{ Pr}^N\brac{a\to b \Given \bar{A}^\delta}-p_i}\leq 2$. Substituting, this and~\eqn{eqn:lipchiltz_cond} in~\eqn{eqn:diff_ki_tilde_ki} we have~\eqn{eqn:diff_ki_tilde_ki_1}.
\end{IEEEproof}

\begin{IEEEproof}[Proof for bounded noise case] 
Let denote the noise $Z_i = \hat{U}_i(a)-U_i(a) - \brac{\hat{U}_i(b)-U_i(b)}$.
Using Hoefding inequality for bounded independent random variables, we have
\begin{equation}\label{eqn:hoefding_ineq}
\prob{\bar{A}^\delta} =\prob{\frac{1}{N}\sum_{i=1}^N \abs{Z_i}> \delta} \leq 2\exp\brac{-2N \frac{\delta^2}{\ell^2}}.
 \end{equation}
Substituting~\eqn{eqn:hoefding_ineq} in Lemma~\ref{lem:diff_pN_i_p_i}, we have
\begin{equation}\label{eqn:tilde_ki_ineq2}
p_i\brac{ 1-\frac{\delta}{\tau}} - 4e^{-2N \frac{\delta^2}{l^2}}  \leq p^N_i \leq p_i\brac{1+\frac{\delta}{\tau}}+ 4e^{-2N \frac{\delta^2}{\ell^2}}.
 \end{equation}
Substituting the number of samples $N$ from~\eqn{eq:samplesBounded} and  $\delta = \brac{1-\xi}\tau$ in above, we have
  \begin{equation}\label{eqn:tilde_ki_ineq2}
\xi\brac{p_i-e^{-\frac{2}{\tau}}} \leq p^N_i \leq \brac{2-\xi}p_i + \xi e^{-\frac{2}{\tau}}.
 \end{equation}
As before, the transition probability $P^\tau_{ab}$ of BLLA is 
  \begin{equation}\label{eqn:Pab_ubn}
\xi m_i(t)\brac{p_i-e^{-\frac{2}{\tau}}} \leq P^\tau_{ab} \leq \brac{2-\xi}m_i(t)p_i + \xi m_i(t) e^{-\frac{2}{\tau}}.
 \end{equation}

In the following, we calculate the resistance of lower and upper bound of the above $ P^\tau_{ab}$ using Lemma~\ref{lem:comp_func}.
Note that $\text{Res}(p_i) = \Delta^{+}_i$, $\text{Res}(e^{-\frac{2}{\tau}} ) = 2$, and $\Delta_i \leq 2$.
The resistance of lower bound of  $ P^\tau_{ab}$ is
$\text{Res}\brac{\xi m_i(t)\brac{p_i-e^{-\frac{2}{\tau}}}}$
\begin{align}
 &= \text{Res}\brac{\xi m_i(t)}+\text{Res}\brac{\brac{p_i-e^{-\frac{2}{\tau}}}},\\
&= \min\cbrac{\text{Res}\brac{p_i},\text{Res}\brac{e^{-\frac{2}{\tau}}}} ,\\
&= \text{Res}(p_i).
\end{align}
Similarly, the resistance of upper bound of  $ P^\tau_{ab}$ is
$\text{Res} \brac{\brac{2-\xi}m_i(t)p_i + \xi m_i(t) e^{-\frac{2}{\tau}}}$
\begin{align}
 &= \min\cbrac{\text{Res}\brac{\brac{2-\xi}m_i(t)p_i},\text{Res}\brac{\xi m_i(t) e^{-\frac{2}{\tau}}}} ,\\
 &= \min\cbrac{\text{Res}\brac{p_i},\text{Res}\brac{e^{-\frac{2}{\tau}}}} ,\\
&= \text{Res}(p_i).
\end{align}
Since both the bounds have the same resistance, by Rule~\ref{eq:Res_mono_existence} the resistance of $P^\tau_{ab}$ exists and is equal to $\text{Res}(p_i)$.
Therefore, the  resistance of transitions of BLLA with bounded noise is same as in the case of without noise~\eqn{eqn:Rab_bllla}.
\end{IEEEproof}

\begin{IEEEproof}[Proof for unbounded noise case] 
In this case, we use Chernoff bound to calculate $\prob{\bar{A}^\delta}$ because of the unbounded noise as below.
Let denote the noise $Z_i$ with moment generating function $M(\theta)$.
\begin{align}
\label{eqn:prob_A_1} \prob{\bar{A}^\delta}  &=\prob{\frac{1}{N}\sum_{i=1}^N \abs{Z_i}> \delta}
=2\prob{\frac{1}{N}\sum_{i=1}^N Z_i> \delta},\\
\label{eqn:chernoff_bound}&\leq2\exp\brac{-N\log\brac{\frac{e^{\theta^*\delta}}{M(\theta^*)}}},
 \end{align}
 where,~\eqn{eqn:prob_A_1} is obtained by assuming symmetric probability distribution of noise. However, for non-symmetric distribution a more complex expression can be obtained. Also, we used the Chernoff bound for independent and identically distributed random variables to obtain the equation~\eqn{eqn:chernoff_bound}.
 
Substituting~\eqn{eqn:chernoff_bound}, $\delta = \brac{1-\xi}\tau$, and the number of samples $N$ from~\eqn{eq:samplesUnbounded} in Lemma~\ref{lem:diff_pN_i_p_i}, we have
 \begin{equation}\label{eqn:tilde_ki_ineq2}
\xi p_i-4\xi e^{-\frac{2}{\tau}} \leq p^N_i \leq \brac{2-\xi}p_i + 4\xi e^{-\frac{2}{\tau}}.
 \end{equation}

Following the same steps as before, we get that the resistance of transitions of BLLA with unbounded noise is same as in the case of without noise~\eqn{eqn:Rab_bllla}.

\end{IEEEproof}

\subsection{Proof of Convergence of BLLA with Decreasing~$\tau(t)$}\label{app:conv_blla_decreasing}

We give the proof in the case of bounded noise. The proof for unbounded noise can be done similarly. The proof is divided into several lemmas.
For a given parameter $\tau$, we fix $N(\tau)$ as in~\eqn{eq:samplesBounded}, and we consider $p(\tau) = p^{N(\tau)}$.
Recall that $p(\tau) = \mathbb{E}\left[ f(\Delta_N, \tau) \right]$
with $ f(\delta, \tau) = \left( 1+\exp\left(\frac{\delta}{\tau}\right)
\right)^{-1}$.
\begin{lemma}\label{lem:p_deriv_prop}
  Function $\displaystyle \frac{\partial f(\delta, \tau)}{\partial \tau}$ is odd, has the sign of $\delta$, is bounded in absolute value by $k/\tau$ for some $k>0$, and the maximum is attained (for positive value) at the point $a^*\tau$, where $a^* > 0$.
\end{lemma}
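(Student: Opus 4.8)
The plan is to compute $\partial f(\delta,\tau)/\partial\tau$ in closed form and then reduce all four assertions to a single-variable analysis via the substitution $u=\delta/\tau$. Differentiating $f(\delta,\tau)=\brac{1+e^{\delta/\tau}}^{-1}$ with respect to $\tau$ by the chain rule, using $\partial_\tau(\delta/\tau)=-\delta/\tau^{2}$, yields
\begin{equation}\label{eq:df_dtau}
\frac{\partial f(\delta,\tau)}{\partial\tau}=\frac{\delta}{\tau^{2}}\cdot\frac{e^{\delta/\tau}}{\brac{1+e^{\delta/\tau}}^{2}}=\frac{1}{\tau}\,g\!\brac{\frac{\delta}{\tau}},\qquad g(u)\coloneqq\frac{u\,e^{u}}{\brac{1+e^{u}}^{2}}=\frac{u}{4\cosh^{2}(u/2)},
\end{equation}
where the last identity uses $\brac{1+e^{u}}^{2}=4\,e^{u}\cosh^{2}(u/2)$. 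It then suffices to establish the stated properties for the one-variable function $g$ and translate back.

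Three of the claims follow directly from \eqn{eq:df_dtau}. For oddness, $\cosh$ is even, so $g(-u)=-g(u)$, hence $\partial f/\partial\tau=\tfrac1\tau g(\delta/\tau)$ is odd in $\delta$ for each fixed $\tau>0$. For the sign, the factor $e^{u}/\brac{1+e^{u}}^{2}$ is strictly positive, so $g(u)$ has the sign of $u$, i.e.\ $\partial f/\partial\tau$ has the sign of $\delta$. For the bound, $\abs{g(u)}=\abs{u}/\brac{4\cosh^{2}(u/2)}$ is continuous on the whole real line and tends to $0$ as $\abs{u}\to\infty$, so $k\coloneqq\sup_{u}\abs{g(u)}$ is finite and positive; consequently $\abs{\partial f/\partial\tau}=\tfrac1\tau\,\abs{g(\delta/\tau)}\le k/\tau$.

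For the location of the maximum I would differentiate once more, obtaining $g'(u)=\frac{1-u\tanh(u/2)}{4\cosh^{2}(u/2)}$. On $(0,\infty)$ the map $u\mapsto u\tanh(u/2)$ is a product of positive, strictly increasing functions, hence strictly increasing from $0$ to $+\infty$; so $1-u\tanh(u/2)$ has a unique zero $a^{*}>0$ (characterised by $a^{*}\tanh(a^{*}/2)=1$), positive on $(0,a^{*})$ and negative on $(a^{*},\infty)$. Thus $g$ is strictly increasing then strictly decreasing on $(0,\infty)$, attaining its positive maximum at $u=a^{*}$, and by oddness its minimum on $(-\infty,0)$ at $-a^{*}$. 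Undoing $u=\delta/\tau$, for each fixed $\tau>0$ the map $\delta\mapsto\partial f(\delta,\tau)/\partial\tau$ is maximized over $\delta$ at $\delta=a^{*}\tau$, as claimed. I do not anticipate a real obstacle: everything is elementary calculus, and the only step needing a short argument --- and the one I would single out --- is the monotonicity of $u\mapsto u\tanh(u/2)$, since this is what guarantees that the maximizer is unique and hence that the constant $a^{*}$ is well defined.
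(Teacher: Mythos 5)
Your proposal is correct and follows essentially the same route as the paper: both compute $\partial f/\partial\tau = \frac{\delta}{\tau^2}\,\frac{1}{2+e^{\delta/\tau}+e^{-\delta/\tau}}$ (your $\frac{1}{\tau}g(\delta/\tau)$ with $g(u)=u/(4\cosh^2(u/2))$ is the same expression), read off oddness and sign, and locate the maximizer by showing the critical-point equation $u\tanh(u/2)=1$ --- written in the paper as $2+e^{a}(1-a)+e^{-a}(1+a)=0$ --- has a unique positive root $a^*$ via a monotonicity argument. The only cosmetic difference is that you obtain the bound $k/\tau$ from $k=\sup_u\abs{g(u)}$ directly, while the paper evaluates at $\delta^*=a^*\tau$ to get the explicit $k=a^*/(2+e^{a^*}+e^{-a^*})$; these coincide.
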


\begin{IEEEproof}
  We have
  \begin{equation}
    \label{eq:derivative}
    \frac{\partial f(\delta, \tau)}{\partial \tau} = \frac{\delta}{\tau^2}\frac{1}{2+\exp(\delta / \tau)+\exp(-\delta / \tau)}.
  \end{equation}
  This is an odd function in $\delta$ that has the sign of $\delta$. Hence, we just consider the case $\delta>0$.
  Then
  \begin{equation*}
    \frac{\partial^2 f(\delta, \tau)}{\partial \delta \partial \tau} = \frac{1}{\tau^2(2+Y+Y^{-1})^2}\left[ 1 - \frac{\delta}{\tau}  \frac{ Y - Y^{-1}  }{2+ Y + Y^{-1}}  \right],
   \end{equation*}
with $Y = \exp(\delta / \tau)$. This is first positive and then negative when $\delta$ is positive. The maximum is reached when
  \begin{equation}
    \label{eq:max}
    \frac{\delta}{\tau}  \frac{ Y - Y^{-1}  }{2+ Y + Y^{-1}} = 1.
  \end{equation}
We claim that the maximum in $\delta$ is attained for $\delta^* = a^* \tau$, with $a^*>0$ a constant. Indeed, consider $\delta = a\tau$ with $a>0$ in~\eqn{eq:max}, which gives 
  \begin{equation*}
  2+\exp(a)(1-a)+\exp(-a)(1+a) = 0.
  \end{equation*}
  Consider the function $g(a) = 2+\exp(a)(1-a)+\exp(-a)(1+a)$. We have $g(0) = 4$, and $g$ tends to $-\infty$ when $a$ goes to $\infty$. Furthermore, the derivative is $-a(\exp(a)+\exp(-a))$ which is strictly negative, hence there is a unique solution $a^*$ to the equation~\eqn{eq:max}. Replacing $\delta$ by $\delta^* = a^* \tau$ in \eqn{eq:derivative} yields:
 \begin{equation*}
  \frac{\partial f(\delta^*, \tau)}{\partial \tau} = \frac{a^*}{\tau}\frac{1}{2+\exp(a^*)+\exp(-a^*)}.
  \end{equation*}
Hence the result follows with $\displaystyle k = \frac{a^*}{2+\exp(a^*)+\exp(-a^*)}$.
\end{IEEEproof}

\begin{lemma}
  \label{lem:monotony}
  If $\delta > 0$ (resp. $\delta < 0$), then $p(\tau)$ is increasing (resp. decreasing) in the vicinity of $\tau = 0$. 
  Furthermore, $\abs{p'(\tau)}$ has resistance $\delta$.
\end{lemma}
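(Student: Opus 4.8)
The plan is to differentiate $p(\tau)=\expect{f(\Delta_N,\tau)}$ through the expectation and then to compare against the deterministic properties of $\partial f/\partial\tau$ supplied by Lemma~\ref{lem:p_deriv_prop}. Concretely, I would first argue that $p'(\tau)=\expect{\partial f(\Delta_N,\tau)/\partial\tau}$: on any interval of $\tau$ on which the step function $N(\tau)$ is constant, $\Delta_N$ has a fixed law, and the difference quotients of $f(\Delta_N,\cdot)$ are dominated near $\tau$ by a constant (since $\abs{\partial f(\cdot,s)/\partial\tau}\le k/s$ by Lemma~\ref{lem:p_deriv_prop}), so dominated convergence applies; the jump points of $N(\tau)$ are irrelevant since all assertions are about $\tau\tendsto0^{+}$, and "monotone in the vicinity of $0$" is read piecewise on the (shrinking) pieces accumulating at $0$. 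Then I would split the expectation along the concentration event $A^{\delta'}=\cbrac{\abs{\Delta_N-\delta}<\delta'}$ with $\delta'=\brac{1-\xi}\tau$, exactly as in Appendix~\ref{app:conv_blla}.

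On $A^{\delta'}$ one has $\delta-\brac{1-\xi}\tau<\Delta_N<\delta+\brac{1-\xi}\tau$; since there are finitely many action profiles, a positive $\delta$ is bounded away from $0$, so for $\tau$ small $\Delta_N>0$ on this event and, by Lemma~\ref{lem:p_deriv_prop}, $\partial f(\Delta_N,\tau)/\partial\tau>0$ there. Feeding the two-sided bounds on $\Delta_N$ into the explicit formula~\eqn{eq:derivative} and using $\frac14 e^{-\Delta_N/\tau}\le\brac{2+Y+Y^{-1}}^{-1}\le e^{-\Delta_N/\tau}$ with $Y=e^{\Delta_N/\tau}\ge1$, the integrand on $A^{\delta'}$ is squeezed between two strictly positive functions of the form $(\text{sub-exponential})\cdot e^{-\delta/\tau}$; as $\prob{A^{\delta'}}\tendsto1$, Rules~\ref{eq:Res_const}, \ref{eq:Res_exp}, \ref{eq:Res_prod_fg} and~\ref{eq:Res_mono_existence} of Lemma~\ref{lem:comp_func} yield that $\expect{\partial f(\Delta_N,\tau)/\partial\tau;\,A^{\delta'}}$ is strictly positive with resistance exactly $\delta$.

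For the complement, Lemma~\ref{lem:p_deriv_prop} gives $\abs{\expect{\partial f(\Delta_N,\tau)/\partial\tau;\,\bar A^{\delta'}}}\le\frac{k}{\tau}\prob{\bar A^{\delta'}}$, and the Hoeffding bound~\eqn{eqn:hoefding_ineq} together with the sample count~\eqn{eq:samplesBounded} forces $\prob{\bar A^{\delta'}}\le\frac{\xi}{2}e^{-2/\tau}$, so this remainder is $o\brac{(\text{sub-exponential})\cdot e^{-\delta/\tau}}$ because $\delta<2$ (the gap $\delta=\Delta_i$ is a normalized difference of potentials, hence strictly below $2$ in absolute value). Writing $p'(\tau)$ as the $A^{\delta'}$ term plus this negligible remainder and invoking the representation~\eqn{eq:f_equiv_Res} (equivalently Rule~\ref{eq:Res_sum_f_minus_g}), the first term dominates: $p'(\tau)>0$ for $\tau$ near $0$, so $p$ is increasing there, and $\text{Res}(\abs{p'(\tau)})=\delta$. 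The case $\delta<0$ follows verbatim after replacing $\delta$ by $-\delta=\abs{\delta}$, using the oddness in $\delta$ of $\partial f/\partial\tau$ recorded in Lemma~\ref{lem:p_deriv_prop}: then $p'(\tau)<0$ near $0$, $p$ is decreasing, and $\abs{p'}$ has resistance $\abs{\delta}$.

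The main obstacle is the resistance bookkeeping, and in particular guaranteeing that the bad-event contribution is negligible at the exponential scale, i.e. has resistance strictly larger than $\delta$: this is precisely where the prescribed number of samples~\eqn{eq:samplesBounded} is used, since it is what makes $\prob{\bar A^{\delta'}}$ decay like $e^{-2/\tau}$ with $2>\delta$. A secondary, purely technical nuisance is that $N(\tau)$ is piecewise constant, so $p$ is only piecewise smooth; since every statement is a $\tau\tendsto0^{+}$ statement this is harmless, and one can moreover check that the jumps of $N(\tau)$ are monotonicity-preserving because $f(\cdot,\tau)$ is convex on the positive half-line (a more concentrated $\Delta_N$ can only decrease $p$).
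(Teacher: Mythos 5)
Your proposal is correct and follows essentially the same route as the paper's own proof: differentiate under the expectation using the $k/\tau$ bound from Lemma~\ref{lem:p_deriv_prop}, split on the concentration event so that the Hoeffding/sample-count estimate makes the bad-event contribution of order $\tau^{-1}e^{-2/\tau}$ (resistance $2>\delta$), and extract resistance $\delta$ from the good-event term via the explicit formula~\eqn{eq:derivative}, concluding by the squeeze rule of Lemma~\ref{lem:comp_func}. The only material difference is in the squeeze step: you bound the integrand directly via $\tfrac14 e^{-\Delta_N/\tau}\le\brac{2+Y+Y^{-1}}^{-1}\le e^{-\Delta_N/\tau}$ on the window $\abs{\Delta_N-\delta}<(1-\xi)\tau$, whereas the paper evaluates $\partial f/\partial\tau$ at the endpoints $\delta\pm\epsilon$ using its monotonicity and then sets $\epsilon=\tau^2$; your variant keeps a single window width consistent with the sample count used in the probability bound, which is slightly cleaner, but the argument is the same.
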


\begin{IEEEproof}
  We consider $\delta > 0$. The case $\delta <0$ is  similar.

  We will show that the derivative $p'(\tau)$ is positive in the vicinity of 0. Previous lemma shows that $\displaystyle \frac{\partial f(\delta, \tau)}{\partial \tau} \leq k/\tau$. Since the constant function $k/\tau$ is integrable w.r.t. to the distribution of $\Delta_N$, then 
 \begin{equation*}
  p'(\tau) =  \mathbb{E}\sbrac{  \frac{\partial f(\Delta_N, \tau)}{\partial \tau}  }.
 \end{equation*}
  By previous lemma, the point reaching the maximum of $\displaystyle \frac{\partial f(\delta, \tau)}{\partial \tau}$ goes to zero when $\tau$ goes to zero, and the function is then decreasing. Hence, for any $\epsilon$, there is $\tau$ small enough such that the minimum (resp. maximum) of the derivative on the interval $[\delta - \epsilon, \delta + \epsilon]$  is attained at $\delta + \epsilon$ (resp. $\delta - \epsilon$). Consider the event 
\begin{equation}
  A^\epsilon = \cbrac{\abs{\Delta^N_i-\Delta_i}< \epsilon}.
\end{equation}  
Following the  proof techniques used to show the convergence of BLLA
\begin{align}
    p'(\tau) & =  \mathbb{E}\left[  \frac{\partial f(\Delta_N, \tau)}{\partial \tau}  \right],\\
    & = \mathbb{E}\left[  \frac{\partial f(\Delta_N, \tau)}{\partial \tau} | \bar{A}^\epsilon   \right] \mathbb{P}[\bar{A}^\epsilon ] +  \mathbb{E}\left[  \frac{\partial f(\Delta_N, \tau)}{\partial \tau} | A^\epsilon \right] \mathbb{P}[ A^\epsilon ],\\
\label{eq:p1_lb_1}    & \geq -\frac{k}{\tau} \mathbb{P}[\bar{A}^\epsilon ] + \frac{\partial f(\delta+\epsilon, \tau)}{\partial \tau}\mathbb{P}[ A^\epsilon ],\\
\label{eq:p1_lb_2}     & \geq -\frac{k\xi}{\tau} \exp(-\frac{2}{\tau})+ 0.5\frac{\partial f(\delta+\epsilon, \tau)}{\partial \tau}.
  \end{align}
In the above~\eqn{eq:p1_lb_1} is obtained by using Lemma~\ref{lem:p_deriv_prop} and~\eqn{eq:p1_lb_2} is obtained by choosing $\delta = (1-\xi)\tau$. 
 Note that as in~\eqn{eq:derivative} the above second term is equivalent to $\displaystyle \frac{\delta+\epsilon}{\tau^2}\exp(-\frac{\delta+\epsilon}{\tau})$, which is a dominant term compared to $\displaystyle \frac{k\xi}{\tau} \exp(-\frac{2}{\tau})$ if $\tau$ is small enough. Hence the derivative is lower bounded by a positive function and then is positive. More, by choosing $\epsilon = \tau^2$, we see that the derivative is lower bounded by a function equivalent to $\frac{1}{\tau^2}e^{-\frac{\delta} {\tau}}$, which has the resistance $\delta$. 

  The upper bound is obtained with the following inequality:
\begin{align*}
  p'(\tau)  & = \mathbb{E}\left[  \frac{\partial f(\Delta_N, \tau)}{\partial \tau} | \bar{A}^\epsilon   \right] \mathbb{P}[\bar{A}^\epsilon ] +  \mathbb{E}\left[  \frac{\partial f(\Delta_N, \tau)}{\partial \tau} | A^\epsilon \right] \mathbb{P}[ A^\epsilon ],\\
    & \leq \frac{k}{\tau} \mathbb{P}[\bar{A}^\epsilon ] + \frac{\partial f(\delta-\epsilon, \tau)}{\partial \tau}\mathbb{P}[ A^\epsilon ],\\
    & \leq \frac{k\xi}{\tau} \exp(-\frac{2}{\tau})+ 0.5\frac{\partial f(\delta-\epsilon, \tau)}{\partial \tau}.
  \end{align*}
  And by choosing $\epsilon = \tau^2$ we obtain the same equivalent function $\frac{1}{\tau^2}e^{-\frac{\delta} {\tau}}$, which has the resistance $\delta$,
  
  Therefore, from Rule~\ref{eq:Res_mono_existence} we have that the resistance of $\abs{p'(\tau)}$ is $\delta$.
\end{IEEEproof}

\begin{lemma}
  \label{lem.weak_ergo}
  The non-homogeneous Markov chain generated by the BLLA algorithm with decreasing parameter $\tau(t) = \frac{1}{\log(1+t)}$ is weakly ergodic.
\end{lemma}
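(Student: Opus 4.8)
\emph{Plan.} Weak ergodicity of the chain means $\tau_1\brac{T(m,n)}\to 0$ as $n\to\infty$ for every $m$, where $T(m,n)=P^{(m)}P^{(m+1)}\cdots P^{(n-1)}$ is the product of the one‑step BLLA transition matrices on the finite profile space $X$ — at step $s$, $P^{(s)}$ uses temperature $\tau(s)=1/\log(1+s)$ and sample count $N(\tau(s))$ as in~\eqn{eq:samplesBounded} — and $\tau_1$ is Dobrushin's ergodic (contraction) coefficient, so that $\|\mu P-\nu P\|_{\mathrm{TV}}\le\tau_1(P)\,\|\mu-\nu\|_{\mathrm{TV}}$. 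I would establish this through the standard block/minorization criterion: it is enough to produce a reference profile $x_0\in X$, a block length $D$, a time $t_0$, and numbers $\epsilon_t>0$ with $\sum_{k\ge 0}\epsilon_{t_0+kD}=\infty$ such that $T(t,t+D)(a,x_0)\ge\epsilon_t$ for every $a\in X$ and every $t\ge t_0$. Indeed a stochastic matrix with one column bounded below by $\epsilon$ has $\tau_1\le 1-\epsilon$, and since $\tau_1$ is submultiplicative, grouping the steps after any $m\ge t_0$ into blocks of length $D$ gives $\tau_1\brac{T(m,n)}\le\prod_{k}\brac{1-\epsilon_{m+kD}}\to0$; the case $m<t_0$ then follows from $T(m,n)=T(m,t_0)T(t_0,n)$. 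So the entire proof reduces to producing such $\epsilon_t$.

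\emph{The key estimate.} Take $x_0$ to be a global maximizer of the potential $\phi$ and $D=\abs{X}$. For a single‑player move $a\to b$ we have $\Delta_{ab}=U_i(a)-U_i(b)=\phi(a)-\phi(b)$ since $\mathcal{\hat{G}}^N$ is a noisy potential game with potential $\phi$, so the resistance of that move is $\Delta_{ab}^{+}$, exactly as in~\eqn{eqn:Rab_bllla}. For each $a$ fix a loop‑free path $a=a_0,\dots,a_\ell=x_0$ ($\ell\le D$) of single‑player moves minimizing $\sum_{s}\Delta_{a_{s-1}a_s}^{+}$, and set $R^{*}:=\max_{a}\sum_{s}\Delta_{a_{s-1}a_s}^{+}$; this is the depth of the deepest local‑but‑not‑global maximizer of $\phi$, hence $R^{*}\le 1$ because the utilities are normalized by $\phi_{\max}$ (and $R^{*}<1$ whenever the global maximum strictly exceeds the highest saddle, which is the regime the decreasing‑$\tau$ statement targets — cf.\ the footnote to Theorem~\ref{thm:decTau}). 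To bound $T(t,t+D)(a,x_0)$ from below, force steps $1,\dots,\ell$ to realize this path — at step $s$: pick the revising player and trial action of move $a_{s-1}\to a_s$, probability at least $\brac{\abs{\mathcal{D}}\max_i\abs{X_i}}^{-1}$, then accept — and force the remaining $D-\ell$ steps to be self‑loops at $x_0$ (trial action $=$ current, for which the noise cancels, $\Delta^N_i=0$, accepted with probability $\tfrac12$). The fixed‑$\tau$ analysis of Appendix~\ref{app:conv_blla}, applied pointwise at $\tau=\tau(t+s)$ with $N=N(\tau(t+s))$ — this is exactly where Lemma~\ref{lem:diff_pN_i_p_i} and the resulting inequalities (e.g.~\eqn{eqn:tilde_ki_ineq2}) are needed — shows that the noisy acceptance probability of a move of resistance $\Delta^{+}$ is, for $\tau$ small, at least a positive constant times $e^{-\Delta^{+}/\tau}$. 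Multiplying over the block, using $\sum_{s}\Delta_s^{+}\le R^{*}$ and $1/\tau(t+s)=\log(1+t+s)\le\log(1+t+D)$,
\[
T(t,t+D)(a,x_0)\ \ge\ c_0\exp\!\brac{-R^{*}\log(1+t+D)}\ \ge\ \epsilon_t:=c_1\,(1+t)^{-R^{*}}
\]
for all $t\ge t_0$ chosen large enough for the ``$\tau$ small'' comparisons to hold, with $c_0,c_1>0$ depending only on the game.

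\emph{Conclusion and expected obstacle.} Since $R^{*}\le 1$, $\sum_{k}\epsilon_{t_0+kD}=c_1\sum_{k}(t_0+kD)^{-R^{*}}=\infty$, so the minorization criterion of the first paragraph applies and the chain is weakly ergodic. The Dobrushin bookkeeping and the per‑step probability counting are routine. The real content — and the part I expect to be delicate — is the key estimate, for two reasons: first, carrying the resistance identity $\mathrm{Res}\brac{p^{N}_i}=\Delta^{+}_i$ from the fixed‑$\tau$ setting of Appendix~\ref{app:conv_blla} into the time‑inhomogeneous one, i.e.\ verifying that feeding the schedule $\tau\mapsto N(\tau)$ into the Hoeffding/Chernoff bounds leaves the dominant factor $e^{-\Delta^{+}/\tau}$ intact as $\tau=\tau(t)\to0$; and second, identifying the exponent $R^{*}$ in the block estimate and checking $R^{*}\le 1$, which is precisely where the potential‑game structure and the normalization of $\phi$ are indispensable and which is exactly why $\tau(t)=1/\log(1+t)$ is the borderline admissible cooling rate (a faster schedule would make $\sum_t e^{-R^{*}/\tau(t)}$ convergent and destroy weak ergodicity).
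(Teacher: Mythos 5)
Your route is genuinely different from the paper's: the paper disposes of this lemma in one line by invoking Theorem~2 of Anily--Federgruen, whose hypotheses it checks via Lemma~\ref{lem:monotony} (monotonicity and resistance of $p'(\tau)$), the sandwich bound~\eqn{eqn:tilde_ki_ineq2}, and the choice of schedule, whereas you attempt a self-contained Dobrushin-coefficient block-minorization argument. Your framework (submultiplicativity of the ergodic coefficient, one-column lower bound $\epsilon_t$ with $\sum_k \epsilon_{t_0+kD}=\infty$, and the per-step lower bound $p^N_i \geq c\, e^{-\Delta^{+}_i/\tau}$ extracted from~\eqn{eqn:tilde_ki_ineq2}) is sound. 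However, there is a genuine gap in the key quantitative step: the claim $R^{*}\le 1$. The quantity you define, $R^{*}=\max_a \min_{\text{paths }a\to x_0}\sum_s \Delta^{+}_{a_{s-1}a_s}$, is the worst-case \emph{cumulative} descent of $\phi$ along the best single-player-move path to the global maximizer; it is not the depth of the deepest local-but-not-global maximizer. Normalization gives $\phi\in[0,1]$, hence each individual term $\Delta^{+}_{a_{s-1}a_s}\le 1$ and each individual basin has depth at most $1$, but a path that must cross several basins accumulates several descents, so $R^{*}$ can exceed $1$ (e.g.\ a landscape with values $0.8,0,0.9,0,1$ along the only low-resistance route already forces cumulative descent $1.7$). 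If $R^{*}>1$ then $\sum_t (1+t)^{-R^{*}}<\infty$ and your minorization criterion yields nothing, so the asserted bound is exactly the point that needs proof and is not supplied.

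Concretely, the constant that is at most $1$ under normalization is Hajek's depth $d^{*}$ (the maximal amount one must descend below a local maximum before reaching a strictly higher state), and passing from the crude block exponent $R^{*}$ to $d^{*}$ is precisely the nontrivial content of the Hajek/Anily--Federgruen analysis: one must use blocks of growing length and account for the long sojourns inside each basin, rather than forcing a single low-temperature path in a window of fixed length $D=\abs{X}$. To repair your proof you would either have to prove $R^{*}\le 1$ for the CAP game (which does not follow from normalization alone), or replace the fixed-length block estimate by the basin-by-basin argument --- which is in effect what the paper outsources to the cited theorem. The remaining bookkeeping in your write-up (player/trial selection probabilities, self-loops at $x_0$, carrying the resistance identity $\mathrm{Res}(p^N_i)=\Delta^{+}_i$ through the schedule $\tau\mapsto N(\tau)$) is correct and matches the fixed-$\tau$ analysis of Appendix~\ref{app:conv_blla}.
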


\begin{IEEEproof}
  The conditions of validity of Theorem 2 in~\cite{anily1987simulated}
  are checked by Lemma~\ref{lem:monotony}, Equation~(\ref{eqn:tilde_ki_ineq2}) and the
  classical choice of decreasing parameter $\tau$. More details about
  weak ergodicity can also be found in~\cite{bremaud2013markov}.
\end{IEEEproof}

If a real valued function $f$ is defined on the interval $\sbrac{a,b}$, $f$ is differentiable and its derivative $f'$ is Riemann integrable then its total variation $V_a^b(f)$ is
\begin{equation}
V_a^b(f) = \int_a^b \abs{f'(x)}dx.
\end{equation}
$f$ is bounded variation function if its total variation is finite i.e., $V_a^b(f)< \infty$. If the derivative $f'$ is bounded then $V_a^b(f)< \infty$ and $f$ is bounded variation function.

Let $\pi(\tau)$ be the stationary distribution of the homogeneous Markov chain  for a given $\tau$.
\begin{lemma}
  \label{lem:bound_deriv}
  $\pi(\tau)$ has a bounded derivative.
\end{lemma}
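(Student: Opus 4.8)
The plan is to obtain a rational expression for $\pi(\tau)$ via the Markov--chain tree (Freidlin--Wentzell) formula, differentiate it, and keep every piece under control as $\tau\to 0^{+}$ using the resistance calculus of Lemma~\ref{lem:comp_func} together with the derivative estimates of Lemmas~\ref{lem:p_deriv_prop}--\ref{lem:monotony}. For each fixed $\tau>0$ the chain generated by BLLA is finite and irreducible, so $\pi_a(\tau)=Q_a(\tau)/D(\tau)$, where $Q_a(\tau)=\sum_{T\in\mathcal{T}_a}\prod_{(x\to y)\in T}P^\tau_{xy}$ ($\mathcal{T}_a$ the set of spanning in-trees rooted at $a$) and $D(\tau)=\sum_b Q_b(\tau)>0$. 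Thus $\pi_a$ is a rational function of the finitely many transition entries $P^\tau_{xy}$ whose denominator stays bounded away from $0$ on $(0,\tau_0]$, hence is $C^{1}$ there, and by the chain rule $\pi_a'(\tau)=\sum_{x,y}\partial_{P_{xy}}\pi_a\cdot(P^\tau_{xy})'$; it remains to bound these finitely many summands on $(0,\tau_0]$.

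I would first collect the relevant resistances. From the BLLA computation, $\mathrm{Res}(P^\tau_{xy})=\Delta^{+}_{xy}$, where $\Delta_{xy}$ is the potential difference of the single-player move $x\to y$, while Lemma~\ref{lem:monotony} gives $\mathrm{Res}(\lvert(P^\tau_{xy})'\rvert)=\lvert\Delta_{xy}\rvert\ge\Delta^{+}_{xy}$; a strictly positive resistance forces the function to $0$ as $\tau\to 0$ (a sub-exponential factor times $e^{-r/\tau}$, $r>0$, vanishes), so $(P^\tau_{xy})'$ is bounded on $(0,\tau_0]$, and $(P^\tau_{xy})'\equiv 0$ when $\Delta_{xy}=0$ since then $P^\tau_{xy}$ is constant. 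Since $Q_a$ and $D$ are sums of products of the $P^\tau_{xy}$, Lemma~\ref{lem:comp_func} yields $\mathrm{Res}(Q_a)=\rho_a:=\min_{T\in\mathcal{T}_a}\sum_{e\in T}\mathrm{Res}(P^\tau_e)$ and $\mathrm{Res}(D)=\rho^{*}:=\min_b\rho_b$; using $0\le Q_a\le D$ it follows that $\pi_a$ has resistance $\rho_a-\rho^{*}\ge 0$ and that $1-\pi_a=D^{-1}\sum_{b\ne a}Q_b$ has resistance $\ge\min_{b\ne a}\rho_b-\rho^{*}$.

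The core step is that each product $\partial_{P_{xy}}\pi_a\cdot(P^\tau_{xy})'$ has strictly positive resistance (or is identically $0$). Differentiating a tree-product with respect to $P^\tau_{xy}$ simply deletes that edge, lowering the tree's resistance by exactly $\mathrm{Res}(P^\tau_{xy})=\Delta^{+}_{xy}$; combining this with the decomposition $\partial_{P_{xy}}\pi_a=(1-\pi_a)D^{-1}\partial_{P_{xy}}Q_a-\pi_a D^{-1}\partial_{P_{xy}}(D-Q_a)$ and the resistances above gives $\mathrm{Res}(\partial_{P_{xy}}\pi_a)\ge\rho_a-\rho^{*}-\Delta^{+}_{xy}$, with a strict gain supplied by the factor $1-\pi_a$ (when $a$ is the unique $\phi$-maximizer, so $\min_{b\ne a}\rho_b>\rho^{*}$) or by $\rho_a>\rho^{*}$ (when $a$ is not a maximizer). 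Multiplying by $(P^\tau_{xy})'$, whose resistance is $\lvert\Delta_{xy}\rvert\ge\Delta^{+}_{xy}$, one obtains a function of resistance $\ge(\rho_a-\rho^{*})+\max\{0,-\Delta_{xy}\}>0$, hence bounded (indeed vanishing) on $(0,\tau_0]$. Summing over the finitely many $(x,y)$ bounds $\lvert\pi_a'(\tau)\rvert$, and over $a$ bounds $\lVert\pi'(\tau)\rVert$.

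The main obstacle is exactly this resistance bookkeeping near $\tau=0$: because $D(\tau)$ is exponentially small there, the crude bound $\lvert\pi_a'\rvert\le\lvert Q_a'\rvert/D+\lvert D'\rvert/D$ only gives $O(\mathrm{poly}(1/\tau))$, which is not even summable along $\tau(t)=1/\log(1+t)$; one genuinely needs both that differentiating an edge never pushes the resistance below that edge's own resistance, and that the off-root mass $1-\pi_a$ decays exponentially. A minor point is the non-generic case of several global maximizers of $\phi$, which is handled by lumping that set before running the argument; in the generic case one can also shortcut everything by observing that $1-\pi_a(\tau)$ is $C^{1}$, eventually monotone and equivalent to $(\mathrm{poly}(1/\tau))\,e^{-r/\tau}$ near $0$, so its derivative --- hence $\pi'(\tau)$ --- is bounded.
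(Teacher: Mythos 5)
Your argument is correct in its essentials and, although it starts from the same two ingredients as the paper's proof --- the Markov-chain tree formula $\pi_a=Q_a/D$ and the resistance calculus of Lemma~\ref{lem:comp_func} combined with $\mathrm{Res}(\lvert p_e'\rvert)=\lvert\delta_e\rvert$ from Lemma~\ref{lem:monotony} --- it then takes a genuinely different, and in fact more careful, route. The paper bounds $\lvert\pi_c'\rvert\le\frac{\lvert u_c'\rvert}{\sum_d u_d}+\sum_d\frac{\lvert u_d'\rvert}{\sum_{d'}u_{d'}}$ and claims each ratio is bounded because the minimal resistance tree contains a null-resistance edge (Lemma~\ref{lem:min_res_tree}), hence $\mathrm{Res}(u_c')>R_{\min}$. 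That step does not follow: $\mathrm{Res}(\lvert u_c'\rvert)$ is the \emph{minimum} over trees $T$ and edges $e\in T$ of $\mathrm{Res}(T)-\delta_e^{+}+\lvert\delta_e\rvert$, and this minimum equals $R_{\min}$ as soon as the minimal tree rooted at the potential maximizer contains an edge with $\delta_e>0$ (which happens, e.g., whenever $\phi$ has a strict suboptimal local maximum --- the interesting case). Then $\lvert u_{c^*}'\rvert\asymp\tau^{-2}e^{-R_{\min}/\tau}$ while $\sum_d u_d\asymp e^{-R_{\min}/\tau}$, so the paper's ratio blows up like $1/\tau^2$ even though the lemma's conclusion remains true. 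Your decomposition $\partial_{P_{xy}}\pi_a=(1-\pi_a)D^{-1}\partial_{P_{xy}}Q_a-\pi_a D^{-1}\partial_{P_{xy}}(D-Q_a)$ is exactly what captures the missing cancellation: the factor $1-\pi_a$ (resistance $\min_{b\ne a}\rho_b-\rho^*>0$ at the unique maximizer), respectively $\rho_a-\rho^*>0$ elsewhere, supplies the strictly positive resistance that the crude ratio lacks, and your ``main obstacle'' remark pinpoints precisely why the cruder bound is insufficient. Two caveats. First, the several-maximizer case is only waved at (``lumping''); it deserves an actual argument. Second, your closing shortcut --- that $1-\pi_a$ being $C^1$, eventually monotone and equivalent to $\mathrm{poly}(1/\tau)\,e^{-r/\tau}$ forces its derivative to be bounded --- is not valid as stated: monotonicity controls the total variation, i.e., the integral of $\lvert\pi_a'\rvert$, not its supremum (although bounded variation is in fact all that the application in Theorem~\ref{thm:decTau} ultimately requires). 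Keep the main argument and drop the shortcut.
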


\begin{IEEEproof}
  By the Markov chain tree theorem~\cite{anantharam1989proof} for every state $c\in X$,  we have  
  $\pi_c(\tau) = \frac{u(c)}{\sum_{d\in X}u(d)}$ where
\begin{equation}\label{eq:u_c}
  u_c(\tau) = \sum_{T \in \mathcal{T}_c} \prod_{e \in T} p_e(\tau),
\end{equation}
$p_e(\tau)$~\eqn{eqn:pN_i} is transition probability to state $e$ 
  and $\mathcal{T}_c$ is the set of trees rooted in state $c$.
  Then
  \begin{align*}
    \abs{\pi_c'(\tau)} & = \left| \left( \frac{u_c}{\sum_{d}u_{d}} \right)' \right|,  \\
    &  \leq    \frac{|u'_c|(\sum_{d}u_{d})}{(\sum_{d}u_{d})^2} +   \frac{u_c\abs{\sum_{d}u'_{d} }}{(\sum_{d}u_{d})^2},\\
    &  \leq    \frac{|u_c'|}{\sum_{d}u_{d}} +   \frac{| \sum_{d}u'_{d} |}{\sum_{d}u_{d}},\\
    &  \leq    \frac{|u_c'|}{\sum_{d}u_{d}} +   \sum_{d}\frac{| u'_{d} |}{\sum_{d'}u_{d'}}.
  \end{align*}
  Hence it suffices to show that $\displaystyle \frac{|u_c'|}{\sum_{d}u_{d}}$ is bounded for all states $c$. 
  
 Let $ \text{Res}\brac{T}$ denotes the total resistance of a tree $T$ and $R_{\min}$ denotes the resistance of the minimal resistance tree.
By using Lemma~\ref{lem:comp_func}, we obtain 
 \begin{align}
 \text{Res}\brac{\sum_{d}u_d(\tau)} &= \text{Res}\brac{\sum_{d} \sum_{T \in \mathcal{T}_d} \prod_{e \in T} p_e(\tau)},\\
=& \min_{d\in X}\min_{T \in \mathcal{T}_d} \text{Res}\brac{\prod_{e \in T} p_e(\tau)},\\
  =& \min_{d\in X, T \in \mathcal{T}_d} \text{Res}\brac{T},\\
 = &R_{\min}.
 \end{align}

The derivative of transition probability $u'_c(\tau)$ is obtained using~\eqn{eq:u_c} as
\begin{equation}
u'_c(\tau) = \sum_{T \in \mathcal{T}_c}\sum_{e \in T} p'_e(\tau)\prod_{d \in T / e} p_d(\tau),
\end{equation}
where $p'_e(\tau)$ is equivalent to $\exp(-|\delta|/\tau)/\tau^2$ by Lemma~\ref{lem:monotony}.
The resistance of $u'_c(\tau)$ is 
 \begin{equation}
\text{Res}\brac{u'_c} 
 = \min_{T \in \mathcal{T}_c}\min_{e\in T}\sbrac{ \text{Res}\brac{p'_e}+ \sum_{d \in T/e}\text{Res}\brac{ p_d(\tau)}}. 
 \end{equation}
 Since by Lemma~\ref{lem:comp_func} ,  $\text{Res}\brac{p_e'}$  is $\abs{\delta}$ that is same as that of $\text{Res}\brac{p_e}$~\eqn{eqn:Rab_bllla}
  if $\delta > 0$ and is strictly greater if $\delta < 0$. Therefore, we have $\text{Res}\brac{p_e'} \geq\text{Res}\brac{p_e}$ 
  and  $\text{Res}\brac{u'_c}\geq \text{Res}\brac{\sum_{c}u_c(\tau)}$.
 But the minimal resistance tree must contain a transition with null resistance (which corresponds to the best response). 
 
 \begin{lemma}\label{lem:min_res_tree}
 A minimum resistance tree must contain a transition with zero resistance.
 \end{lemma}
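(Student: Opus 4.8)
The plan is to pin down the root of a minimum‑resistance tree and then read off a zero‑resistance edge pointing into it. I will use the resistance formula already derived: for a single‑player deviation $a=(a_i,a_{-i})\to b=(a_i',a_{-i})$ one has $R_{ab}=\Delta_i^{+}=\max\{0,\phi(a)-\phi(b)\}$ (with utilities normalized by $\phi_{\max}$), since $U_i(a)-U_i(b)=\phi(a)-\phi(b)$ in the potential game; in particular a transition has null resistance exactly when it does not decrease the potential. I will also record the reciprocity identity $R_{ab}-R_{ba}=\phi(a)-\phi(b)$, which follows by checking the two cases $\phi(a)\ge\phi(b)$ and $\phi(a)<\phi(b)$.

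First I would prove a path‑reversal lemma: given $T\in\mathcal{T}_c$ and a state $d$, reversing the unique $T$‑path $d=y_0\to y_1\to\cdots\to y_m=c$ (delete these edges, insert $y_{l+1}\to y_l$) produces a spanning tree $T'\in\mathcal{T}_d$, and summing the reciprocity identity along the reversed edges telescopes to
\begin{equation*}
\text{Res}(T')-\text{Res}(T)=\sum_{l=0}^{m-1}\big(\phi(y_{l+1})-\phi(y_l)\big)=\phi(c)-\phi(d).
\end{equation*}
Next, let $T\in\mathcal{T}_c$ attain $R_{\min}=\min_{d\in X,\,T\in\mathcal{T}_d}\text{Res}(T)$. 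Then for every $d$ the reversed tree satisfies $\text{Res}(T')\ge R_{\min}=\text{Res}(T)$, hence $\phi(c)\ge\phi(d)$; since $d$ is arbitrary, the root $c$ is a global maximizer of $\phi$. Finally, assuming $\abs{X}\ge 2$ (otherwise there is no transition and the claim is vacuous), the root $c$ has an incoming edge $b\to c$ in $T$ — the last edge of the $T$‑path from any $b'\ne c$ — and this is a single‑player deviation with $R_{bc}=\max\{0,\phi(b)-\phi(c)\}=0$. So $T$ contains a transition of zero resistance (a move into a potential‑maximizing state, i.e. a best response), as claimed.

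The hard part will be the book‑keeping in the path‑reversal step: one must verify that after reversing the $d$‑to‑$c$ path the resulting directed graph is still a spanning tree rooted at $d$ — that it stays acyclic and every vertex still has a directed path to $d$ — which needs a short case analysis on where a vertex's original path to $c$ first meets the reversed path. Everything else (the telescoping sum, the comparison with $R_{\min}$, and the extraction of the incoming zero‑resistance edge) is routine once the reversal lemma is in hand.
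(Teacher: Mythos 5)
Your proof is correct, but it takes a more ambitious route than the paper's. The paper argues by contradiction with a single edge swap: if every edge of a minimum resistance tree $T_{\min}$ rooted at $s$ had positive resistance, then the edge $s'\to s$ into the root would force $\phi(s')>\phi(s)$, hence the reverse edge $s\to s'$ has zero resistance, and exchanging these two edges produces a spanning tree rooted at $s'$ of strictly smaller total resistance --- contradicting minimality. You instead reverse the entire root-to-$d$ path for an arbitrary state $d$, telescope the reciprocity identity $R_{ab}-R_{ba}=\phi(a)-\phi(b)$ to obtain $\text{Res}(T')-\text{Res}(T)=\phi(c)-\phi(d)$, and conclude that the root $c$ of a minimum resistance tree is a \emph{global maximizer} of $\phi$; the zero-resistance edge is then any edge into the root. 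Both arguments hinge on the same reciprocity fact, and both require the (easy, and left implicit in the paper) verification that reversing an edge or path of a rooted spanning tree again yields a rooted spanning tree; your explicit flagging of that bookkeeping is appropriate. What your version buys is the strictly stronger intermediate result that minimum resistance trees are rooted at potential maximizers --- which is essentially the substance of the stochastic-stability claim in Theorem~\ref{thm:conv_blla} --- at the cost of the extra path-reversal case analysis. For the lemma as stated, the paper's one-edge swap suffices and is shorter, but your argument is sound.
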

 \begin{IEEEproof}
 Assume that a minimum resistance tree $T_{\min}$ have all the transitions with non-zero resistance. 
 Let the root of this tree be a state $s$ and let there be a transition from another state $s'$ to $s$. 
 Let $R_{s'\to s}$ be a non-zero resistance of this transition. 
 Note that the resistance of reverse transition $R_{s\to s'}=0$ because it corresponds to the best response transition.
 Construct a new tree $T$ rooted at state $s'$ by adding the transition $s \to s'$ and removing the transition $s' \to s$. The resistance of the tree $T$ is 
\begin{align}
R_T &= R_{T_{\min}}-R_{s'\to s}+R_{s\to s'},\\
&<R_{T_{\min}}.
\end{align}
We arrive at a contradiction. Therefore, a minimum resistance tree must contain a transition with null resistance.
  \end{IEEEproof}

 Hence, the state $c$ at which $R_{\min}$ is reached  contains at least a transition with $\delta \leq 0$.
 Therefore,  $\text{Res}\brac{u'_c}>R_{\min}$. 
Using Lemma~\ref{lem:comp_func}, we have 
\begin{equation}
\frac{|u_c'|}{\sum_{d}u_{d}} = \kappa\exp\brac{-\frac{\text{Res}\brac{u'_c}-R_{\min}}{\tau}}+h_1(\tau),
\end{equation}
where $h_1(\tau) \in o\brac{\exp\brac{-\frac{\text{Res}\brac{u'_c}-R_{\min}}{\tau}}}$. Observe from the above equation that 
$\frac{|u_c'|}{\sum_{d}u_{d}}\to 0$ as 
$\tau$ goes to zero for all states $c$. This finally shows that the derivative $\abs{\pi_c'(\tau)}$ is bounded.
\end{IEEEproof}

\begin{IEEEproof}[Proof of Theorem~\ref{thm:decTau}] 
  We check the assumptions of Theorem~1 in~\cite{anily1987ergodicity} are satisfied for the proof of Theorem~\ref{thm:decTau}. 
  By Lemma~\ref{lem.weak_ergo}, the algorithm generates a weakly ergodic non-homogeneous Markov chain.
  Lemma~\ref{lem:bound_deriv} shows that the stationary distribution $\pi(\tau)$  of the homogeneous Markov chain is a bounded variation function of $\tau$ (this is a direct consequence of derivative of $\pi(\tau)$  being bounded). 

\end{IEEEproof}

\end{document}